\def\techreport{}
\tikzset{
    every mark/.append style={scale=0.7}
}
\pgfplotsset{
    compat=newest,
    tick label style={font=\tiny},
    title style={font=\scriptsize},
    label style={font=\tiny},
    legend style={font=\tiny},
    tick pos=left,
    minor tick num=5,
    /pgfplots/bar cycle list/.style={/pgfplots/cycle list={%
            {blue,fill=blue!30!white,mark=none,postaction={pattern=north east lines}},%
            {red,fill=red!30!white,mark=none,postaction={pattern=north west lines}},%
            {brown!60!black,fill=brown!30!white,mark=none,postaction={pattern=dots}},%
            {black,fill=gray,mark=none,postaction={pattern=grid}},%
    }},
    log x ticks with fixed point/.style={
        xticklabel={
            \pgfkeys{/pgf/fpu=true}
            \pgfmathparse{exp(\tick)}%
            \pgfmathprintnumber[fixed relative, precision=3]{\pgfmathresult}
            \pgfkeys{/pgf/fpu=false}
        }
    },
    log y ticks with fixed point/.style={
        yticklabel={
            \pgfkeys{/pgf/fpu=true}
            \pgfmathparse{exp(\tick)}%
            \pgfmathprintnumber[fixed relative, precision=3]{\pgfmathresult}
            \pgfkeys{/pgf/fpu=false}
        }
    },
}
\newtheorem{thm}{Theorem}[section]
\theoremstyle{definition}
\theoremstyle{remark}
\numberwithin{equation}{section}
\newcommand{\set}[1]{\left\{#1\right\}}
\newcommand{\A}{\bm{A}}
\renewcommand{\P}{\bm{P}}
\newcommand{\p}{\bm{p}}
\newcommand{\e}{\bm{e}}
\newcommand{\q}{\bm{q}}
\newcommand{\x}{\bm{x}}
\newcommand{\y}{\bm{y}}
\newcommand{\s}{\bm{s}}
\newcommand{\f}{\bm{f}}
\newcommand{\pivec}{\bm{\pi}}
\newcommand{\ph}{\hat{\p}}
\newcommand{\pt}{\tilde{\p}}
\newcommand{\RAG}{\text{RAG}}
\newcommand{\sysname}{PowerWalk\xspace}
\newcommand{\vcd}{VERD\xspace}
\renewcommand{\paragraph}[1]{\noindent\textbf{#1}}
\def\@copyrightspace{\relax}
\begin{document}

\CopyrightYear{2016}
\setcopyright{acmcopyright}
\conferenceinfo{CIKM'16 ,}{October 24-28, 2016, Indianapolis, IN, USA}
\isbn{978-1-4503-4073-1/16/10}\acmPrice{\$15.00}
\doi{http://dx.doi.org/10.1145/2983323.2983713}

\clubpenalty=10000
\widowpenalty = 10000

\title{\sysname: Scalable Personalized PageRank via\\
Random Walks with Vertex-Centric Decomposition}

\numberofauthors{1}

\author{
    \alignauthor
    Qin Liu$^1$, Zhenguo Li$^2$, John C.S. Lui$^1$, Jiefeng Cheng$^2$\\
    \affaddr{$^1$The Chinese University of Hong Kong}\\
    \affaddr{$^2$Huawei Noah's Ark Lab}\\
    \email{$^1$\{qliu, cslui\}@cse.cuhk.edu.hk}\\
    \email{$^2$\{li.zhenguo, cheng.jiefeng\}@huawei.com}
}

\maketitle

\begin{abstract}
    Most methods for Personalized PageRank (PPR) precompute and store all
    \emph{accurate} PPR vectors, and at query time, return the ones of interest
    directly. However, the storage and computation of all accurate PPR vectors
    can be prohibitive for large graphs, especially in
    caching them in memory for real-time online querying.  In this paper, we propose a
    distributed framework that strikes a better balance between \emph{offline
    indexing} and \emph{online querying}.
    The offline indexing attains a fingerprint of the PPR vector of each
    vertex by performing billions of
    ``short'' random walks in parallel across a cluster of machines.  We prove
    that our indexing method has an
    exponential convergence, achieving the same precision with previous methods
    using a much smaller number of random walks.  At query time,
    the new PPR vector is composed by a linear
    combination of related fingerprints, in a highly efficient vertex-centric
    decomposition manner.  Interestingly, \emph{the resulting PPR
    vector is much more accurate than its offline counterpart because it
    actually uses more random walks in its estimation}. More importantly, we
    show that such decomposition for a batch of queries can be very efficiently
    processed using a shared decomposition. Our implementation,
    \emph{\sysname}, takes advantage of advanced distributed graph engines and
    it outperforms the state-of-the-art algorithms by orders of magnitude.
    Particularly, it
    responses to tens of thousands of queries on graphs with billions of edges
    in just a few seconds.
\end{abstract}

\keywords{Personalized PageRank; random walks; vertex-centric decomposition}

\section{Introduction} \label{sec:intro}

Nowadays graph data is ubiquitous, usually in very
large scale.  It is of great interest to analyze these big graphs to gain
insights into their formation and intrinsic structures, which can benefit
services such as information retrieval and recommendations.  Consequently,
large-scale graph analysis becomes popular recently, especially in domains like
social networks, biological networks, and the Internet, where big graphs are
prevalent.  It is found that general-purpose dataflow systems like MapReduce and
Spark are not suitable for graph processing~\cite{Malewicz2010, Gonzalez2012}.  Instead, many graph
computing engines such as Pregel~\cite{Malewicz2010} and
PowerGraph~\cite{Gonzalez2012} are being developed, which
can be orders of magnitude more efficient than general-purpose
dataflow systems~\cite{Malewicz2010,
Gonzalez2012, Gonzalez2014}.  A key feature of most graph engines is the adoption
of a simple yet general \emph{vertex-centric} programming model
that can succinctly express a wide variety of graph
algorithms~\cite{Malewicz2010, Gonzalez2014, Gonzalez2012, Cheng2015}. Here, a graph
algorithm is formulated into a user-defined vertex-program which can be
executed on each vertex in parallel.  Each vertex runs its own instance of the
vertex-program which maintains only local data and affects the data of other
vertices by sending messages to these vertices. To leverage the power
of such a graph engine, it is thus important to formulate the algorithm into a
vertex-centric program.

PageRank is a popular measure of importance of vertices in a
graph~\cite{Page1999}. This is particularly useful in large applications where
data and information are crowded and noisy; a measure of their importance
allows to process them by importance~\cite{Page1999}. One prominent application is
the web search engine of Google where the pages returned, in response to any
query, are ordered by their importance measured by PageRank. The intuition
behind PageRank is that a vertex is important if it is linked to by many
important vertices. This ``global'' view of vertex importance does not reflect
individual preferences, which is inadequate in modern search engines, online
social networks, and e-commerce, where customizable services are in urgent
need. Consequently, ``personalized'' PageRank (PPR)~\cite{Page1999} attracts
considerable attention lately that allows to assign more importance to a
particular vertex in order to provide ``personalized views'' of a graph.  PPR
has been widely used in various tasks across different domains: personalization
of web search~\cite{Page1999}, recommendation in online social
networks~\cite{Gupta2013} and mobile-app marketplaces~\cite{he2015mining},
graph partitioning~\cite{Andersena2006}, and other
applications~\cite{Tong2006}. While the enabling of ``customization'' in
PageRank greatly expands its utility, it also brings a huge challenge on its
computation -- the customization to each vertex leads to a workload $N$ times
of the original PageRank, where $N$ is the number of vertices in the graph, and
the customization on any combination of vertices exponentially increases the complexity.

Despite significant progress, existing approaches to PPR computation are still
restricted in large-scale applications.  While the power iteration
method is used in most graph engines to calculate the global
PageRank~\cite{Malewicz2010, Gonzalez2014, Gonzalez2012, Cheng2015}, it is
impractical for PPR computation, which would take $O(N(N+M))$ time for all PPR
vectors ($N$ and $M$ are the numbers of vertices and edges
respectively)~\cite{Berkhin2005}.  Furthermore, most algorithms for PPR
computation are designed for single-machine in-memory systems which cannot deal
with large-scale problems~\cite{Shin2015, Maehara2014b, Lofgren2014}.
The Hub Decomposition algorithm~\cite{Jeh2003} allows personalization only for
a small set of vertices (hubs).  To achieve full personalization (the
computation of all $N$ PPR vectors), it requires $O(N^2)$ space.
Fogaras et al.~\cite{Fogaras2005} proposed a scalable Monte-Carlo solution to
full personalization.  This approach first simulates a large number of short
random walks from each vertex, and store them in a database.  Then it uses
these random walks to approximate PPR vectors for online queries.  As we will
show in our evaluation (Section~\ref{sec:runtime}), to achieve high accuracy,
the simulation of random walks is very time-consuming.  Also, the precomputed
database is usually too large to fit in the main memory for large graphs which
makes this method undesirable in practice.  Bahmani et al.~\cite{Bahmani2010}
suggested to concatenate the short random walks in the precomputed database to
answer online queries.  This extension can reduce the size of database,
but it still incurs random access to the graph data and precomputed
database.  If implemented on a distributed graph engine, it would require thousands
of iterations to concatenate short random walks together to form a long walk which
makes it inefficient.

One can see that, given the great effort in scaling up PPR, it remains open for
a practically scalable solution, in that: (1) it enables full personalization
for all vertices in the graph; (2) it is scalable for very large problems; (3)
it supports online query efficiently; and (4) it executes a large number of
queries in reasonable time, which is critical to modern online services where
numerous requests are fed to a system simultaneously~\cite{Attenberg2011}.

In this paper, we present a novel framework, \emph{\sysname}, for scalable PPR
computation.  We implement \sysname on distributed graph engines, in order to
harness the power of a computing cluster. Furthermore, to enable fast online
services, we separate the computation into \emph{offline preprocessing} and
\emph{online query}, in a flexible way that \emph{the computation can be
shifted to the offline stage as much as the memory budget allows}.
The offline stage uses a variant of the Monte-Carlo methods~\cite{Fogaras2005,
Avrachenkov2007} to compute an approximation for each PPR vector.  We can tune
the precision of PPR vectors according to the memory budget, so that all
approximate PPR vectors can be cached in distributed memory.  At query time,
the PPR vectors of the querying vertices are computed efficiently based on the
precomputed PPR approximate vectors by utilizing the Decomposition
theorem in~\cite{Jeh2003}.
Our main contributions are:
\begin{itemize}
    \item We propose a Monte-Carlo Full-Path (MCFP) algorithm for offline PPR
        computation, and prove that it converges exponentially fast. In
        practice, it requires only a fraction of random walks to achieve the
        same precision compared with the previous Monte-Carlo
        solution~\cite{Fogaras2005}.
    \item We propose a Vertex-Centric Decomposition (\vcd) algorithm which can
        provide results to online PPR queries in real-time and more accurate than their
        offline counterparts. More importantly, it is able to execute a large number
        of queries very efficiently with a shared decomposition.
    \item Combining the MCFP and \vcd algorithms, we propose an
        efficient distributed framework, \sysname, for computing full
        personalization of PageRank. \sysname can adaptively trade off between
        offline preprocessing and online query, according to the memory budget.
    \item We evaluate \sysname on billion-scale real-world graphs and validate
        its state-of-the-art performance. On the Twitter graph with 1.5 billion
        edges, it responses to 10,000 queries in 3.64~seconds and responses to
        100,000 queries in 17.96~seconds.
\end{itemize}

The rest of the paper is organized as follows.  We elaborate our two algorithms
for PPR computation in Section~\ref{sec:alg}.  Section~\ref{sec:implementation}
presents the implementation of \sysname.  We extensively evaluate \sysname in
Section~\ref{sec:eval}.  Section~\ref{sec:related} reviews related work on PPR
computation.  Finally, Section~\ref{sec:conclusion} concludes the paper.

\section{Algorithm} \label{sec:alg}

In this section, we propose two algorithms for computing \emph{Personalized
PageRank (PPR)} for web-scale problems.  The first algorithm is called the
\emph{Monte-Carlo Full-Path} (MCFP) Algorithm, which is based on the Monte-Carlo
simulation framework and can be used to compute the fully PPR (or all
PPR vectors of all vertices in the graph) in an offline manner.  Our second
algorithm is called the \emph{Vertex-Centric Decomposition} (\vcd)
Algorithm, which is capable of computing the PPR vector for any vertex in an
online manner. As we will show in
Section~\ref{sec:implementation}, the proper coupling of these two algorithms can lead to
a practically scalable solution to the PPR computation which
can response to a large number of queries very efficiently.  Before we proceed
to our algorithms, let us first provide some preliminaries.

\subsection{Preliminaries} \label{sec:pre}

Let $G = (V, E)$ be a directed graph, where $V$ is the set of vertices and $E$
is the set of edges.  An edge $(u, v) \in E$ is considered to be directed from
$u$ to $v$.  We also call $(u, v)$ as an out-edge of $u$ and an in-edge of $v$,
and call $v$ as an out-neighbor of $u$ and $u$ as an in-neighbor of $v$.  Let
$N = |V|$ and $M = |E|$ denote the numbers of vertices and edges respectively,
and $O(v)$ denote the set of out-neighbors of $v$.

Personalized PageRank (PPR) measures the ``importance'' of all vertices from a
perspective of a particular vertex.  Let $\p_u$ be the stochastic row vector
that represents the PPR vector of $u$ (i.e.,  $\p_u$ is a non-negative row
vector with entries summed up to $1$), and $\p_u(v)$ represents the
Personalized PageRank of vertex $v$ from the perspective of the source vertex
$u$.  If $\p_u(v) > \p_u(w)$, this means that, from the perspective of vertex
$u$, vertex $v$ is more important than vertex $w$.  Finally, $\p_u$ can be
defined as the solution of the following equation~\cite{Fogaras2005}:
\begin{equation} \label{eq:ppr}
    \p_u = (1-c) \p_u \A + c \e_u,
\end{equation}
where $\A$ is a row-stochastic matrix\footnote{I.e., $\A$ is non-negative and
the entries in each row sum up to $1$.} and $\A_{i,j}$ equals to $1/|O(i)|$ if
$(i,j) \in E$, $\e_u$ is a unit row vector with the $u$-th entry $\e_u(u)$
being one and zero elsewhere, and $c \in (0, 1)$ is a given \emph{teleport
probability}.
Typically, $c$ is set to $0.15$, and empirical studies show that small
changes in $c$ have little effect in practice~\cite{Page1999}.
If $i$ is a dangling vertex (i.e., one without any out-edge), to
make sure $\A$ is row-stochastic, one typical adjustment is to add an
artificial edge from $i$ to $u$~\cite{Berkhin2005}.  This is equivalent to
setting row $i$ of $\A$ as $\e_u$, and as a result, any visit to vertex $i$
will immediately follow by a visit to vertex $u$.  This reflects the intrinsic
of $\p_u$, that is, it encodes the importance of all vertices from the
perspective of vertex $u$.

The above equilibrium equation has the following random walk
interpretation: a walk starts from vertex $u$; in each jump, it will move to
one random out-neighbor with probability $1-c$, or jump back (or teleport) to
vertex $u$ with probability $c$.  Particularly, if the current vertex is a
dangling vertex, then the next jump will surely go to vertex $u$.  In summary,
the PPR vector $\p_u$ is exactly the stationary distribution of such a random
walk model, and $\p_u(v)$ is the probability of the random walk visiting $v$ at
its equilibrium.

\subsection{The MCFP Algorithm} \label{sec:mc}



The fact that $\p_u(v)$ is the probability of a random walk visiting $v$ at its
equilibrium immediately suggests that one can approximate $\p_u(v)$ by $\x_n(v)
/ n$ for a large $n$, where $\x_n(v)$ denotes the number of visits to $v$ in
$n$ steps.
However, simulating a long random walk (or a random walk with many jumps) on a
large graphs is inefficient for the following reasons.  As a large graph
usually cannot be held in the main memory, it must be partitioned across
multiple machines or to be stored on the disk.  At each step, the random walk
may jump to a vertex at another machine or access data on disk, incurring an expensive
network or disk I/O and latency.  Moreover, a long walk takes many iterations
to reach the steady state solution, which can become a bottleneck for systems like MapReduce.
Finally, there is an inherent restriction of implementing a long random walk on
any graph computation engines: the dependency between adjacent moves prevents
executing the walks in parallel.

\begin{figure}[htbp]
    \centering
    \includegraphics[width=0.48\textwidth]{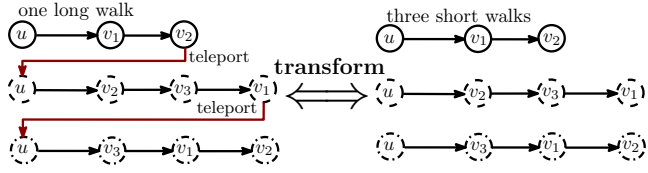}
    \caption{Breaking a long walk into three short ones}
    \label{fig:short-long}
\end{figure}

To overcome this inefficiency, we propose to transform the long random walk
into \emph{many} short walks by breaking it at every teleport move.  This is
illustrated in Figure~\ref{fig:short-long}.  One crucial benefit of doing so is
that these short random walks can be \emph{executed in parallel}.  In the next
section, we will show how to simulate billions of short random walks on
a distributed cluster.  The above idea translates to our Monte-Carlo Full-Path
(MCFP) algorithm for PPR computation as stated in Algorithm~\ref{alg:mc-fp},
and we show in Theorem~\ref{thm:fp-bound} that it converges exponentially fast.

\begin{algorithm}[ht] \label{alg:mc-fp}
    \caption{Monte-Carlo Full-Path (MCFP)}
    \KwIn{vertex $u$, the number of random walks $R$}
    \KwOut{an approximation of $\p_u$}
    Simulate $R$ random walks starting from $u$\;
    At each step, each of the $R$ random walks terminates with probability $c$
    and takes a further step according to the matrix $\A$ with probability
    $1-c$\;
    Approximate $\p_u(v)$ by the fraction of moves resident at
    $v$, i.e., $\p_u(v) \sim \x_n(v) / n$, where $\x_n(v)$ is the number of
    visits by all $R$ random walks to vertex $v$, and $n$ is the total moves in
    all $R$ random walks\;
\end{algorithm}

\begin{thm} \label{thm:fp-bound}
    \label{THM:FP-BOUND} 
    Let $\ph_u$ denote the estimator of $\p_u$ obtained from
    Algorithm~\ref{alg:mc-fp}.  The probability of over-estimating $\p_u$ can
    be bounded as follows:
    \[
        \Pr[\ph_u(v) - \p_u(v) \geq \gamma] \leq \frac{1}{\sqrt{c}}
        \left(1+\frac{\gamma c}{10}\right) \exp\left(\frac{-\gamma^2
        R}{20}\right).
    \]
    The same bound also holds for the probability of under-estimation.
\end{thm}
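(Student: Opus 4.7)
The plan is to exploit the regenerative structure of the short walks: since each walk is an i.i.d.\ excursion that starts at $u$, the over-estimation event can be written as a sum of $R$ i.i.d.\ contributions, to which a Chernoff/Bernstein-type argument applies.

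First I would introduce, for the $i$-th short walk, the pair $(L_i, V_i)$, where $L_i$ is the number of steps and $V_i$ is the number of visits to $v$. The pairs are i.i.d., with $L_i$ geometric of parameter $c$, so $\mathbb{E}[L_i] = 1/c$; by the random-walk interpretation of PPR in Section~\ref{sec:pre}, $\mathbb{E}[V_i] = \sum_{k \geq 0} (1-c)^k (\e_u \A^k)(v) = \p_u(v)/c$. Setting $W_i := V_i - (\p_u(v) + \gamma) L_i$, the event $\{\ph_u(v) - \p_u(v) \geq \gamma\}$ is equivalent to $\{\sum_{i=1}^R W_i \geq 0\}$, and $\mathbb{E}[W_i] = -\gamma/c$. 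A standard Chernoff step then gives
\[
    \Pr\Bigl[\sum_{i=1}^R W_i \geq 0\Bigr] \leq \bigl(\mathbb{E}[e^{s W_1}]\bigr)^R \quad \text{for every } s > 0,
\]
reducing the problem to a sharp bound on the single MGF $\phi(s) := \mathbb{E}[e^{s W_1}]$.

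To bound $\phi(s)$, I would write down the first-step recursion for a walk starting at an arbitrary vertex $w$, namely $\phi_w(s) = e^{s(\mathbf{1}[w=v] - \p_u(v) - \gamma)} \bigl[c + (1-c) \sum_{w'} \A_{ww'} \phi_{w'}(s)\bigr]$, which in matrix form admits the closed solution involving $(I - (1-c) D(s) \A)^{-1}$ with $D(s)$ the diagonal matrix of per-step factors. Taylor-expanding to second order in $s$, using $0 \leq V_1 \leq L_1$ together with the explicit MGF of the geometric distribution to control the remainder, yields a bound of the form $\phi(s) \leq 1 - (\gamma/c)s + O(s^2/c^2)$ on an appropriate range. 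Optimizing at $s$ proportional to $\gamma c$ and raising to the $R$-th power produces the exponential factor $e^{-\gamma^2 R/20}$. Applying the same argument to $-W_i$ handles the under-estimation side.

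The hard step will be tightening the MGF bound sharply enough to recover the explicit prefactors $1/\sqrt{c}$ and $(1 + \gamma c / 10)$. The obstruction is that $V_1$ and $L_1$ are strongly correlated---both are functionals of the same excursion---so $\phi(s)$ does not factor across them, and its dependence on the teleport parameter $c$ is delicate. The $1/\sqrt{c}$ prefactor looks like a Gaussian-type refinement that tracks the second moment of the geometric length, while $(1 + \gamma c / 10)$ has the shape of a Bennett/Bernstein correction arising from retaining the cubic term in the exponential Taylor expansion; getting both constants right will require careful bookkeeping of $\phi(s)$ beyond a naive second-order expansion, most plausibly via the matrix-resolvent representation above.
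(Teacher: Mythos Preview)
Your regenerative/i.i.d.-block approach is a legitimate alternative, but the paper takes a different and much shorter route: it observes that concatenating the $R$ short walks at their teleport points yields a \emph{single} long trajectory of the Markov chain with transition matrix $\P = (1-c)\A + c\,\mathds{1}\,\e_u$, of length $n \approx R/c$, started from $u$. Gillman's Chernoff bound for Markov chains (the paper's Theorem~\ref{thm:gillman}) then applies directly, and the constants fall out of its parameters: the spectral gap of $\P$ is $\epsilon = c$ (since $\lambda_2 = 1-c$), and the $\chi^2$-type norm of the initial distribution against the stationary measure is $1/\sqrt{\p_u(u)} \leq 1/\sqrt{c}$ because $\p_u(u) \geq c$ from Equation~\eqref{eq:ppr}. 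Substituting $\epsilon = c$, $n = R/c$, and rescaling the deviation by $n$ in Gillman's bound reproduces the stated prefactors and exponent verbatim; under-estimation is handled by applying the same bound to the complement set $V\setminus\{v\}$.

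Your speculation about the origins of the constants is therefore off target: $1/\sqrt{c}$ is not a Gaussian refinement of an MGF expansion but the Hilbert-space norm of the start distribution in Gillman's theorem, and both $(1+\gamma c/10)$ and the $20$ in the exponent are constants already present in that theorem rather than Bennett-type corrections to be extracted by hand. Your block-MGF argument could certainly yield \emph{some} exponential bound, but matching these exact constants via the matrix-resolvent expansion you sketch would essentially amount to re-deriving Gillman's spectral Chernoff bound from scratch; the paper simply cites it.
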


\ifdefined\techreport
\begin{proof}
    Please refer to Appendix~\ref{sec:proof-1}.
\end{proof}
\else
\begin{proof}
    For detail proof, please refer to our technical
    report~\cite{Liu2016PowerWalk}.
\end{proof}
\fi

\paragraph{Prior Art.}
Let us compare with one current state-of-the-art random walk method in
computing PPR~\cite{Fogaras2005}.  Consider a random walk starting from vertex
$u$.  At each step, the random walk terminates with probability $c$,  or it
jumps to other states according to the matrix $\A$ with probability $1-c$.  It
has been proved in~\cite{Jeh2003} that the last visited vertex of such a random
walk has a distribution $\p_u$ which suggests another Monte-Carlo algorithm in
approximating $\p_u$.  Note that the distribution $\p_u$ is achieved by
simulating the random walk for many times, and each time, the random walk may
end in a different vertex.  So one needs to normalize all such ending vertex
occurrences to have the probability distribution $\p_u$.  This state-of-the-art
algorithm was proposed in~\cite{Fogaras2005} and it is illustrated in
Algorithm~\ref{alg:mc-ep}.

\begin{algorithm}[ht] \label{alg:mc-ep}
    \caption{Monte-Carlo End-Point (MCEP) in~\protect\cite{Fogaras2005}}
    \KwIn{vertex $u$, the number of random walks $R$}
    \KwOut{an approximation of $\p_u$}
    Simulate $R$ random walks starting from $u$\;
    At each step, each of the $R$ random walks terminates with probability $c$
    and takes a further step according to the matrix $\A$ with probability
    $1-c$\;
    Approximate $\p_u(v)$ by the fraction of $R$ random walks that terminate
    at vertex $v$, i.e., $\p_u(v) \sim \y(v) / R$. Here, $\y(v)$
    denotes the number of the random walks terminate at $v$\;
\end{algorithm}


Note that while Algorithm~\ref{alg:mc-ep} takes only the ending vertex of each
random walk into account, our proposed MCFP algorithm leverages the \emph{full
trajectory} of each random walk.  Our motivation is that the intermediate
vertices on each trajectory contain significant information regarding the
distribution of the random walk and should be included in approximating.
Our theoretical analysis stated in Theorem~\ref{thm:fp-bound} guarantees not
only the validity of our algorithm but also the exponential convergence rate.
Our evaluation in Section~\ref{sec:accuracy} confirms that this does lead to a
much better approximation. In other words, our MCFP is far more efficient than the
state-of-the-art under the same precision.   For example, to
achieve the same precision of our MCFP algorithm with 1,000 random
walks, Algorithm~\ref{alg:mc-ep} needs to simulate 6,700 random walks.



\subsection{The \vcd Algorithm} \label{sec:vc-decomp}

With the MCFP algorithm, we can approximate the fully PPR, i.e., we can compute
an approximate vector $\ph_u$ of $\p_u$ for all $u \in V$, which can be carried
out offline.  Depending on our memory budget, we can adjust the precision of
our approximations by varying $R$ used in the MCFP algorithm, so that all
approximations of PPR vectors fit in the distributed memory.  For web-scale
problems and with limited memory budget, this can lead to low-resolution
approximations.  In this section, we propose a method to recover in real-time a
high-resolution PPR vector from low-resolution approximations.  Our method is
inspired by the decomposition theorem~\cite[Theorem~3]{Jeh2003}.

\begin{thm}[Decomposition]
    Suppose $\p_u$ is the PPR vector of vertex $u$. We have
    \begin{equation} \label{eq:decomposition}
        \p_u = c \cdot \e_u + \frac{1-c}{|O(u)|} \sum_{v \in O(u)} \p_v.
    \end{equation}
\end{thm}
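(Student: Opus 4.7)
The plan is to exploit two facts: PPR is linear in the personalization vector, and the defining fixed-point equation~(\ref{eq:ppr}) has a unique solution. Combined, these let me verify the claimed identity by substitution and uniqueness, with no need to reason about random walks directly.

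First I would introduce a mild extension of the notation: for any probability row vector $\s$, let $\p_\s$ denote the unique solution of $\p_\s = (1-c)\p_\s\A + c\s$. Uniqueness holds because $I-(1-c)\A$ is invertible (the spectral radius of $(1-c)\A$ is at most $1-c<1$), and writing $\p_\s = c\s\bigl(I-(1-c)\A\bigr)^{-1}$ exhibits $\p_\s$ as a \emph{linear} function of $\s$. In particular $\p_{\e_u} = \p_u$, and for any convex combination $\s = \sum_v \alpha_v \e_v$ we get $\p_\s = \sum_v \alpha_v \p_v$. Since row $u$ of $\A$ distributes its mass uniformly on $O(u)$, we have $\e_u\A = \frac{1}{|O(u)|}\sum_{v\in O(u)}\e_v$, so linearity gives $\p_{\e_u\A} = \frac{1}{|O(u)|}\sum_{v\in O(u)}\p_v$.

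Now set $\q = c\e_u + (1-c)\p_{\e_u\A}$; it suffices to show $\q=\p_u$, which I would do by checking that $\q$ satisfies~(\ref{eq:ppr}):
\begin{align*}
(1-c)\q\A + c\e_u
 &= (1-c)\bigl[c\e_u\A + (1-c)\p_{\e_u\A}\A\bigr] + c\e_u\\
 &= c\e_u + (1-c)\bigl[(1-c)\p_{\e_u\A}\A + c\e_u\A\bigr]\\
 &= c\e_u + (1-c)\p_{\e_u\A} \;=\; \q,
\end{align*}
where the third equality applies the defining equation $\p_{\e_u\A} = (1-c)\p_{\e_u\A}\A + c\e_u\A$ to the bracket. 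By uniqueness of the PPR fixed point, $\q=\p_u$, which is exactly the claimed decomposition.

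There is no real obstacle; the only conceptual step is recognizing the linearity of $\s\mapsto\p_\s$, which converts the out-neighbor average $\e_u\A$ into the corresponding average of PPR vectors. A random-walk proof is also available (condition on the first move of the walk starting at $u$: teleport with probability $c$ contributes $c\e_u$, while moving to a uniformly random out-neighbor $v$ with probability $1-c$ continues as a walk whose visit distribution is $\p_v$), but the algebraic verification above is shorter and entirely self-contained from~(\ref{eq:ppr}).
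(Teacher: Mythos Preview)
Your argument is correct. The linearity of $\s\mapsto\p_\s$ together with uniqueness of the fixed point is exactly the right tool, and your verification that $\q=c\e_u+(1-c)\p_{\e_u\A}$ satisfies~(\ref{eq:ppr}) is clean and complete.

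As for comparison: the paper does not actually prove this statement. It is quoted as the Decomposition Theorem from Jeh and Widom~\cite[Theorem~3]{Jeh2003} and used as a black box to motivate the \vcd algorithm. So there is no in-paper proof to compare against. Your algebraic route via $\p_\s = c\s(I-(1-c)\A)^{-1}$ is essentially the standard derivation; the random-walk conditioning argument you sketch at the end is the other common proof and is the one closer in spirit to how Jeh and Widom originally present it. Both are short, and either would be acceptable here.

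One minor caveat worth flagging for completeness: the paper's convention for dangling vertices (Section~\ref{sec:pre}) makes $\A$ depend on the source vertex $u$, so strictly speaking the matrices underlying $\p_u$ and $\p_v$ can differ. The decomposition theorem as stated implicitly assumes a single fixed row-stochastic $\A$ (as in Jeh--Widom), and your proof is valid under that assumption; this is not a defect in your argument but a slight looseness in how the paper sets things up.
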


This theorem basically says that the PPR vector of a vertex can be recovered
from the PPR vectors of its out-neighbors.  Another more important observation
is that such an approximation is more accurate than those of its out-neighbors,
because it actually uses $R \cdot |O(u)|$ random walks when $\p_v$, $v \in
O(u)$, uses $R$ random walks, and the error of $\p_u$ is reduced by a factor of
$1-c$.  Furthermore, the decomposition theorem can be applied recursively to
achieve higher precision.  This suggests Algorithm~\ref{alg:decomp} in
approximating the PPR vector $\p_u$ of vertex $u$ with $T$ iterations.  When $T=0$
(Lines~\ref{line:t0-1}~\&~\ref{line:t0-2}), the algorithm simply returns the
precomputed approximation obtained from our MCFP algorithm.

\begin{algorithm}[ht]
    \caption{Recursive Decomposition} \label{alg:decomp}
    \func{\decomp{$u$, $T$}}{
        \KwIn{a vertex $u$, the number of iterations $T$}
        \KwOut{an approximation of $\p_u$}
        \If{$T=0$}{ \label{line:t0-1}
            \Return the precomputed $\ph_u$ \label{line:t0-2}
        }
        \Return $c \cdot \e_u + \frac{1-c}{|O(u)|} \sum_{v \in O(u)} \decomp(v, T-1)$\;
    }
\end{algorithm}

However, implementing such a recursive algorithm on distributed graph engines
appears to be very challenging, because existing graph engines only provide
vertex-centric programming interface which does not support recursion.  On the
other hand, there is a huge redundancy in Algorithm~\ref{alg:decomp}
because the out-neighbors of different vertices are likely to
overlap. To illustrate our idea,
consider the graph in Figure~\ref{fig:toy-graph}.  To compute $\p_{v_1}$ with
two iterations of recursion by calling \decomp{$v_1$, $2$}, \decomp{$v_5$,
$0$}, \decomp{$v_6$, $0$}, and \decomp{$v_7$, $0$} will be called twice.

\begin{figure}[htbp]
    \centering
    \includegraphics[width=0.3\textwidth]{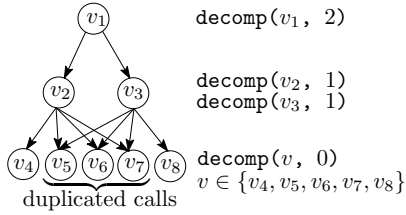}
    \caption{A simple graph to illustrate the \vcd algorithm}
    \label{fig:toy-graph}
\end{figure}

To remove such redundancy in $T$ levels of recursion, we suggest to
\emph{unfold} the $T$ levels of recursion by applying the decomposition in
Equation~\eqref{eq:decomposition} iteratively $T$ times:
\begin{equation} \label{eq:projection}
    \p_u = \s_u^{(T)} + \sum_{v \in V} \f_u^{(T)}(v) \p_v,
\end{equation}
where $\s_u^{(T)}$ and $\f_u^{(T)}$ are two row vectors given by
\begin{align*}
    \s_u^{(T)} & = \s_u^{(T-1)} + \sum_{w \in V} c \cdot \f_u^{(T-1)}(w)\e_w \\
    \f_u^{(T)} & = \sum_{w \in V} \frac{1-c}{|O(w)|} \sum_{v \in O(w)}
    \f_u^{(T-1)}(w) \e_v.
\end{align*}
where
\[
    \s_u^{(0)} = \vec{0}, \f_u^{(0)} = \e_u.
\]

The benefits of the unfolded decomposition in Equation~\eqref{eq:projection} are
several-fold.  First, besides the removal of redundancy, it enables
vertex-centric programming, as we show in Algorithm~\ref{alg:proj}.  Consider
Lines~\ref{line:v-begin}-\ref{line:v-end}, the updates on the two vectors are
carried entirely in a vertex-centric manner, which makes it easy to implement
on existing graph engines as shown in Section~\ref{sec:impl_query}. For this
reason, we call Algorithm~\ref{alg:proj} the \emph{\underline{Ver}tex-Centric
\underline{D}ecomposition} (\vcd) algorithm. Second, since the \vcd algorithm
is iterative, it will not overload the stack like the recursive decomposition
algorithm (Algorithm~\ref{alg:decomp}) when $T$ is large. Finally, as we show
in Section~\ref{sec:impl_query}, it allows to simultaneously compute multiple
PPR vectors very efficiently by sharing network transfer, which is important to
online services.
\ifdefined\techreport
Theorem~\ref{thm:vd} below shows that Algorithm~\ref{alg:proj} is equivalent to
Algorithm~\ref{alg:decomp}.

\begin{thm} \label{thm:vd}
    \label{THM:VD} 
    For $u \in V$ and $T \geq 0$,
    \[
        \decomp(u, T) = \proj(u, T).
    \]
\end{thm}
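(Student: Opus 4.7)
My plan is to prove the identity by induction on $T$, but first recast both sides in a common matrix form so that the two quite differently-shaped recurrences become comparable. Let $M$ be the $N \times N$ matrix with $M_{u,v} = (1-c)/|O(u)|$ when $(u,v) \in E$ and $0$ otherwise, so the decomposition theorem reads $\p_u = c\,\e_u + \sum_v M_{u,v}\,\p_v$. Let $\hat{P}$ be the matrix whose $u$-th row is the precomputed $\ph_u$.

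First I would unroll \decomp. Viewing Algorithm~\ref{alg:decomp} at the matrix level, if $R^T$ denotes the matrix whose $u$-th row is $\decomp(u,T)$, then $R^0 = \hat{P}$ and $R^T = cI + M R^{T-1}$, which telescopes to
\[
    R^T \;=\; c\sum_{k=0}^{T-1} M^k \;+\; M^T \hat{P}.
\]
Hence $\decomp(u,T) = \e_u R^T = c\sum_{k=0}^{T-1} \e_u M^k + \e_u M^T \hat{P}$. This is the clean closed form I want to match against \proj.

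Next I would handle \proj by a short induction on $T$, establishing the closed forms $\f_u^{(T)} = \e_u M^T$ and $\s_u^{(T)} = c\sum_{k=0}^{T-1} \e_u M^k$. The base case $T=0$ is immediate from $\f_u^{(0)} = \e_u$ and $\s_u^{(0)} = \vec{0}$. For the inductive step, the key observation is that $\sum_{v\in O(w)} \frac{1-c}{|O(w)|}\,\e_v$ is precisely the $w$-th row of $M$; therefore
\[
    \f_u^{(T)} \;=\; \sum_w \f_u^{(T-1)}(w)\,M[w,:] \;=\; \f_u^{(T-1)} M \;=\; \e_u M^T,
\]
and similarly $\s_u^{(T)} = \s_u^{(T-1)} + c\,\f_u^{(T-1)} = c\sum_{k=0}^{T-1}\e_u M^k$. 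Plugging these into the definition of \proj gives $\proj(u,T) = c\sum_{k=0}^{T-1}\e_u M^k + \e_u M^T \hat{P} = \decomp(u,T)$, completing the proof.

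The main obstacle is conceptual rather than technical: a naive direct induction on $T$ is awkward because \decomp\ unfolds the recursion from the source $u$ \emph{outward} ($\decomp(u,T)$ calls $\decomp(v,T-1)$ for $v \in O(u)$), while the $\s_u^{(T)}, \f_u^{(T)}$ recurrence in Equation~\eqref{eq:projection} unfolds from $u$ \emph{forward along walks} (it updates the coefficients by pushing mass along out-edges of vertices $w$ that currently carry mass). Matching these two different bookkeeping schemes term-by-term would require juggling awkward double sums. Passing through the matrix expression $\e_u M^T$ bypasses this entirely and makes the equivalence transparent.
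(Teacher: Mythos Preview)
Your proof is correct, but it takes a genuinely different route from the paper's. You pass through an explicit closed form: writing $M=(1-c)\A$, you show that both $\decomp(u,T)$ and $\proj(u,T)$ equal $c\sum_{k=0}^{T-1}\e_u M^k + \e_u M^T\hat P$, by separately unrolling each recurrence. The paper instead proves a sliding invariant: for every $k=0,\dots,T$,
\[
\proj(u,T)=\s_u^{(T-k)}+\sum_{v\in V}\f_u^{(T-k)}(v)\,\decomp(v,k),
\]
and inducts on $k$; at $k=0$ this is the definition of $\proj(u,T)$, and at $k=T$ it collapses to $\decomp(u,T)$. So the paper's argument ``meets in the middle,'' peeling one layer off the $\s,\f$ recurrence while simultaneously adding one layer to the $\decomp$ recursion, never writing down a closed form. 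Your matrix approach is arguably cleaner and yields the explicit expression $\f_u^{(T)}=\e_u M^T$, $\s_u^{(T)}=c\sum_{k<T}\e_u M^k$ as a byproduct (useful, e.g., for reading off convergence rates); the paper's invariant avoids introducing new notation and makes the step-by-step equivalence of the two bookkeeping schemes visible without detouring through matrix algebra.
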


\begin{proof}
    Please refer to Appendix~\ref{sec:proof-2}.
\end{proof}
\else
In our technical report~\cite{Liu2016PowerWalk}, we show that
Algorithm~\ref{alg:proj} is equivalent to Algorithm~\ref{alg:decomp}.
\fi


\begin{algorithm}[ht]
    \caption{Vertex-centric Decomposition (VERD)} \label{alg:proj}
    \func{\proj{$u$, $T$}}{
        \KwIn{a vertex $u$, the number of iterations $T$}
        \KwOut{an approximation of $\p_u$}
        $\s_u^{(0)} \leftarrow \vec{0}, \f_u^{(0)} \leftarrow \e_u$\;
        \For{$i=1$ \KwTo $T$}{
            $\f_u^{(i)} = \vec{0}$\;
            \ForEach{$w \in V$}{
                \If{$\f_u^{(i-1)}(w) > 0$}{
                    \label{line:v-begin}
                    \ForEach{$v \in O(w)$}{
                        $\f_u^{(i)}(v) \leftarrow \f_u^{(i)}(v) + \frac{1-c}{|O(w)|} \f_u^{(i-1)}(w)$\;
                    }
                    $\s_u^{(i)}(w) \leftarrow \s_u^{(i-1)}(w) + c \cdot \f_u^{(i-1)}(w)$\;
                    \label{line:v-end}
                }
            }
        }
        \Return $\s_u^{(T)} + \sum_{v \in V} \f_u^{(T)}(v) \ph_v$\;
    }
\end{algorithm}

To illustrate how the \vcd algorithm removes redundant calls,
consider again the graph in Figure~\ref{fig:toy-graph}.  Calling \proj{$v_1$,
$2$}, we have $\s_{v_1}^{(2)}$ equals to
\[
    \left(c, \frac{c(1-c)}{2}, \frac{c(1-c)}{2}, 0, 0, 0, 0, 0\right)
\]
and $\f_{v_1}^{(2)}$ equals to
\[
     \left(0, 0, 0, \frac{(1-c)^2}{8}, \frac{(1-c)^2}{4}, \frac{(1-c)^2}{4},
     \frac{(1-c)^2}{4}, \frac{(1-c)^2}{8}\right).
\]
Then the \vcd algorithm simply returns a new approximation of
$\p_{v_1}$ by computing
\[ s_{v_1}^{(2)} + \sum_{i=4}^8 f_{v_1}^{(2)}(i) \ph_{v_i}. \]
In this process, each involved precomputed approximation is loaded only once. In contrast, in the
recursive decomposition algorithm, we have to load $\ph_{v_5}$, $\ph_{v_6}$,
and $\ph_{v_7}$ twice.


The sizes of $\s_u^{(T)}$ and $\f_u^{(T)}$ grow with the number of iterations
$T$, and in the worst case, they can be up to the size of $\p_u$, which is the
number of vertices reachable from $u$.  In practice, when $T$ is small,
$\s_u^{(T)}$ and $\f_u^{(T)}$ are usually highly sparse which can be stored in
hash tables or sorted vectors.  Also, we can discard values below a certain
threshold $\epsilon$ in $\s_u^{(T)}$ and $\f_u^{(T)}$ to keep them sparse when
$T$ becomes larger.  On distributed graph engines, network communication is
mainly caused by the synchronization at each iteration.  As we will show in
Section~\ref{sec:accuracy}, by utilizing the precomputed approximations, our
\vcd algorithm needs just a few iterations to achieve reasonable
accuracy, which makes it quite efficient for online query.

\paragraph{Prior Art.}
In Algorithm~\ref{alg:proj}, when $T \rightarrow \infty$, we have $\f_u^{(T)}
\rightarrow \bm{0}$.  So when $T$ is large enough, Algorithm~\ref{alg:proj} can
be used to compute PPR vectors without any precomputed approximations by simply
returning $\s_u^{(T)}$ as an approximation of $\p_u$.  Similar approaches were
also used in~\cite{Berkhin2006, Andersena2006}.  The key difference between our
algorithm and previous approaches is that our \vcd algorithm can
utilize the precomputed approximations to accelerate the online query of PPR
vectors.  As shown in Section~\ref{sec:runtime}, with precomputed
approximations, the query response time can be reduced by $86\%$.

\section{Implementation} \label{sec:implementation}

We now present our complete solution to PPR computation, including its
implementation details.  Our proposed framework is called \emph{\sysname},
which consists of two phases: \emph{offline preprocessing} and \emph{online
query}.  In the offline preprocessing, our MCFP algorithm is used to
approximate the fully PPR (all the vectors $\p_u$ for all $u \in V$).  The size
and computation time of an approximate PPR vector depend on the number of
random walks $R$ in the MCFP algorithm, which in turn depends on the available
memory budget.  In practice, $R$ should be chosen such that the approximate
fully PPR can be computed in reasonable time and can be cached in the allocated
memory.  We call these approximate PPR vectors the ``\emph{PPR index}''.  Recall
that our MCFP algorithm simulates $R$ random walks for each vertex in $V$.
Therefore, the major challenge in generating the PPR index is how to efficiently
simulate all $N \cdot R$ random walks.  We will discuss our solution
implemented on top of DrunkardMob~\cite{Kyrola2013} in
Section~\ref{sec:impl_pre}.

For the PPR query to a vertex $u$ (i.e., to compute $\p_u$), we have two
options: (1) we can return the approximate PPR vector $\ph_u$ in the PPR index
directly; (2) we can use our proposed \vcd algorithm to compute a
new approximate PPR vector $\ph_u$ from the PPR index.  In the first case, to
achieve high accuracy, we need a very large $R$ in the offline preprocessing.
As we will show in Section~\ref{sec:runtime}, when $R$ is large, the
preprocessing is very time consuming.  Also the resulting PPR index is too
large to fit in the main memory which slows down the online query
significantly.  The key idea of \sysname is that, to achieve the same level of
accuracy as the first case, the second case allows us to
use a smaller $R$ in the preprocessing and migrate the computation cost to the online
query.  More importantly, the smaller $R$ allows us to cache the PPR index in
the main memory which accelerates the online query greatly.  We report in
Section~\ref{sec:runtime} that the second
case is much more efficient in handling large batches of queries which is
common in nowadays search engines~\cite{Attenberg2011}.  We will discuss how to execute a large number of
PPR queries efficiently on PowerGraph~\cite{Gonzalez2012} in
Section~\ref{sec:impl_query}.

\subsection{Preprocessing} \label{sec:impl_pre}

In preprocessing, we use our MCFP algorithm to approximate $\p_u$ for every
vertex $u \in V$.  These approximate PPR vectors will constitute the \emph{PPR
index} which accelerates online query.  The MCFP algorithm simulates $R$ random
walks per vertex, thus $N \cdot R$ random walks in total.  Depending on the
memory budget, there are two cases to consider.  If the graph can be cached in
the main memory of a single machine, we can use a simple loop to simulate one
random walk for each source vertex,
where the walk will be confined to one machine and no network or disk latency
will be incurred.  Repeating this procedure $N \cdot R$ times gives us $R$
random walks per vertex.


If the graph is too large to be cached in the main memory of a single machine,
it has to be partitioned and distributed across multiple machines or to be
accessed from disk.  Then at each step, the random walk may incur a network
transmission or disk access.  As network or disk I/O is much slower than memory
access, the random walk simulation will be inefficient.  To address this
challenge, Kyrola proposes an algorithm called \emph{DrunkardMob} to simulate
billions of random walks on massive graphs on a single PC~\cite{Kyrola2013}.
Since it has been shown to be an efficient solution for computing random walks
and is also used in production by Twitter~\cite{Kyrola2013}, we implement our
offline preprocessing based on DrunkardMob, with several important
improvements, as discussed below.

DrunkardMob is built on top of disk-based graph computation systems like
GraphChi~\cite{Kyrola2012}.  To improve the performance, DrunkardMob simulates
a large number of random walks simultaneously and assumes that the states of
all walks can be held in memory.  The states of walks can be seen as a mapping
from vertices to walks: for each vertex, DrunkardMob knows the walks whose last
hop is in that vertex.  In a graph computation system, the edges are usually
sorted by their source vertex, which means we can load a chunk of vertices and
their out-neighbors using a sequential disk I/O.  At each iteration, the graph
computation system performs a sequential scan over the whole graph.  For each
incoming vertex and its out-neighbors, DrunkardMob first retrieves all walks on
that vertex from memory and then moves these walks to the next hop.

To further accelerate the preprocessing, we reimplement DrunkardMob
on our system, VENUS~\cite{Liu2015b} (a disk-based graph computation system in
C++) and extend it in a distributed environment using MPI~\cite{MPI94}.
\sysname distributes the simulation of random walks to a set of workers.  One
of the workers acts as the master to coordinate the remaining slave workers.
The master divides the set of vertices of an input graph into disjoint
intervals.  When there is an idle slave, the master assigns an interval of
vertices to that slave.  Each slave works independently by using DrunkardMob to
simulate $R$ random walks for each vertex in the assigned interval and computes
an approximate PPR vector for each vertex.  Finally, all approximate PPR
vectors compose the PPR index which is used to accelerate the online query.

\paragraph{Remark.}
Simulating random walks on graphs is a staple of many other ranking and
recommendation algorithms~\cite{Fujiwara2012a, Kusumoto2014}.  Our
implementation of random walk simulation is highly competitive compared to
previous solutions on general distribute dataflow systems.  For example, to
simulate $100$ random walks from each
vertex on the Twitter graph, DrunkardMob on VENUS takes 1728.2~seconds while
an implementation on Spark takes 2967~seconds on the same
cluster~\cite{Li2015}.  The detailed setup of our experiments is described in
Section~\ref{sec:eval}.

Next, we give a brief introduction to PowerGraph before we proceed to present
our online query implementation.

\subsection{PowerGraph}

To achieve low query response time, it is essential to utilize the main memory.
Hence we implement the query phase of \sysname on
PowerGraph~\cite{Gonzalez2012}, which is a popular distributed in-memory graph
engine.  Note that it is also possible to implement \sysname on shared-memory
graph processing systems like Galois~\cite{Pingali2011} and
Ligra~\cite{Shun2013}.  However, we do not consider this case, since their
maximum supported graph size is limited by the memory size of a single machine.

Many recent graph engines adopt a flexible \emph{vertex-centric programming
model}~\cite{Malewicz2010, Gonzalez2014, Gonzalez2012, Cheng2015}, which is
quite expressive in encompassing most graph algorithms.  A graph algorithm
can be formulated into a \emph{vertex-program} which can be executed on each vertex
in a parallel fashion, and a vertex can communicate with neighboring vertices
either synchronously or asynchronously.  Each vertex runs its own instance
of the vertex-program which maintains only local information of the graph and
performs user-defined tasks.  The vertex-program can affect the data of other
vertices by sending messages to these vertices.


Compared with other graph engines~\cite{Malewicz2010, Kyrola2012},
PowerGraph~\cite{Gonzalez2012} is empowered by a more sophisticated
vertex-centric \emph{GAS (Gather-Apply-Scatter)} programming model.  In the GAS
model, a vertex-program is split into three conceptual phases: \emph{gather},
\emph{apply}, and \emph{scatter}. In executing the vertex-program for a vertex,
the \emph{gather} phase assembles information from adjacent edges and vertices.
The result is then used in the \emph{apply} phase to update the vertex data.
Finally, the \emph{scatter} phase distributes the new vertex data to the
adjacent vertices.

Next, we show how to formulate the query phase of our proposed framework,
\sysname, on top of PowerGraph.

\subsection{Batch Query} \label{sec:impl_query}

Most existing methods for fully PPR first compute and store the fully PPR in a
database, and at query time, load the PPR vectors from the database
directly~\cite{Bahmani2011, Kyrola2013}.  This is not scalable as the
computation and storage of fully PPR can be prohibitively costly, especially
for large graphs.  For example, our evaluation shows that it takes more than 11
hours to compute the fully PPR for the uk-union graph with 133.6 million
vertices and 5.5 billion edges when $R=2000$, and it needs 1.1~TB to store the
PPR index.  The required time and space increase with the size of the graph and
the number of random walks starting from each vertex.

In practice, it is much desired to allow to allocate the computation and
storage between the offline and online stages, according to the available
budget in memory, response time, and precision.  Motivated by this, our
framework \sysname proposed in this paper computes an approximate of fully PPR
offline, whose computation and storage cost can be decided based on the allowed
time and memory. These precomputed approximate PPR vectors are called \emph{PPR
index}.  At query time, we apply our \vcd algorithm
to efficiently compute a more precise approximation
of the PPR vectors of interest based on the PPR index.  Particularly, our
\vcd algorithm can execute a large number of queries efficiently in
parallel, which is important to modern online services where multiple requests
can come at the same time.  Our key observation is that the computation of
multiple PPR vectors shares the access to the graph and the PPR index, so the
small network packets used to access the graph and the PPR index can be
multiplexed and aggregated into packets with large payload.  This reduces the
average computation time of each PPR vector, because bulk network transfer
workloads are more efficient.  For online query, \sysname buffers the incoming
PPR queries and computes a batch of PPR queries at a time.  As we will show in
Section~\ref{sec:runtime}, our evaluation indicates that we can compute
thousands of PPR queries in several seconds.

We now elaborate further how we can perform batch PPR queries.  We use a vertex
set $S$ to represent a batch of PPR queries: for each $u \in S$, we would like
to approximate $\p_u$.  We can utilize our vertex-program on PowerGraph to
compute $\f_u$ and $\s_u$ for each $u \in S$ as shown in
Algorithm~\ref{alg:query}.  Each instance of the vertex-program maintains two
maps: \fmap and \smap.  For the instance of vertex $u$, $\fmap[v]$ and
$\smap[v]$ represent $\f_v(u)$ and $\s_v(u)$ respectively.  At the beginning of
each iteration, \fmap is initialized with \prevf sent from the previous
iteration (Line~\ref{line:init-prevf}), except that at the first iteration we
let $\f_u = \e_u$ for $u \in S$ (Line~\ref{line:init-query}).  In the method
\apply{}, for each vertex $w$ that $\f_w(u)$ is not zero, we update $\s_w(u)$
and $\f_w(u)$ accordingly.  The method \scatter{}  will be invoked on each
out-edge of vertex $u$, and \fmap will be sent to each out-neighbors of vertex
$u$.  At the end of each iteration, the PowerGraph engine will aggregate \fmap
by their target vertices.

In summary, the online query can be split into two steps.  First, we run the
above vertex-program iteratively to compute $\f_u$ and $\s_u$ for $u \in S$.
Suppose $\ph_u$ is the approximate PPR vector of $u$ stored in the index.  In
the second step, we compute a refined approximation for each $\p_u$ as discussed in
Section~\ref{sec:vc-decomp}:
\[ \pt_u = \s_u + \sum_{v \in V} \f_u(v) \ph_v. \]

\begin{algorithm}[ht]
    \caption{\vcd on PowerGraph} \label{alg:query}
    \class{\vp{$u$}}{
        \KwData{\fmap is a map and $\fmap[v]$ represents $\f_v(u)$}
        \KwData{\smap is a map and $\smap[v]$ represents $\s_v(u)$}
        \proc{\init{$u$, \prevf}}{
            \If{first iteration}{
                \If{$u \in S$}{
                    $\fmap[u] \leftarrow 1$\;
                    \label{line:init-query}
                }
            }\Else{
                $\fmap \leftarrow \prevf$\;
                \label{line:init-prevf}
            }
        }
        \proc{\apply{$u$}}{
            \ForEach{key-value pair $\langle w, t \rangle \in \fmap$}{
                $\smap[w] \leftarrow \smap[w] + c \cdot t$\;
                $\fmap[w] \leftarrow \frac{1-c}{|O(u)|} t$\;
            }
        }
        \proc{\scatter{$u$, edge $(u, v)$}}{
            send \fmap to vertex $v$\;
        }
    }
\end{algorithm}

\subsection{Analysis} \label{sec:analysis}

In this subsection, we analyze the complexity of our MCFP algorithm and
\vcd algorithm.  We also elaborate on some choices we made in designing
\sysname.

Let us start with the MCFP algorithm for a single vertex $u$ described in
Algorithm~\ref{alg:mc-fp}.  If the graph fits in the memory of a single
machine, the time complexity is $O(R/c)$, since we simulate $R$ walks and the
average length of these walks is $1/c$.  More importantly, the size of the
obtained approximation $\ph_u$ is also $O(R/c)$.  So the size of the PPR index
obtained in the preprocessing phase is bounded by $N \cdot R/c$.  On the other
hand, consider the \vcd algorithm for a single vertex $u$ described
in Algorithm~\ref{alg:proj}.  When the number of iterations $T$ is large
enough, the worst case time complexity is $O(N+M)$, since the algorithm works
in the same way as the breadth-first search.  Similarly, in the worst case, the size of $\s_u$,
$\f_u$, and the final approximation $\pt_u$ is $O(N)$.

To compute the PPR vectors for all vertices in a vertex set $S$ at query time,
it is possible to use one of the two Monte-Carlo methods described in
Section~\ref{sec:mc} instead of the \vcd algorithm by starting a number of
random walks from each vertex in $S$.  To make sure that most random walks have
terminated, we need to run a Monte-Carlo method for at least $10$ iterations.
On the other hand, if we use the \vcd
algorithm for online query, we only need two iterations to achieve reasonable
accuracy as shown in our evaluation (Section~\ref{sec:accuracy}).  On
distributed settings, the running time is mainly decided by the bulk network
transfers in each iteration which makes the \vcd algorithm a
preferred approach.  We compare the running time of the MCFP algorithm and
the \vcd algorithm in terms of online query in
Section~\ref{sec:runtime}, and show that the \vcd algorithm is more
efficient.  For example, to execute 10,000 PPR queries on a Twitter social
network with 41 million vertices, the MCFP algorithm takes 179.38~seconds
while the \vcd algorithm takes only 3.64~seconds, which has orders
of magnitude improvement in computational efficiency.

\section{Evaluation} \label{sec:eval}

In this section, we experimentally evaluate our framework, \sysname.  In
Section~\ref{sec:accuracy}, we evaluate the accuracy of our MCFP algorithm and
our \vcd algorithm with various parameters.  In
Section~\ref{sec:runtime}, we evaluate the time and space costs of the offline
preprocessing and also the running time of the online batch query in \sysname.
Let us first introduce the setup of our experiments.

\subsection{Experimental Setup} \label{sec:setup}

In the following experiments, unless
we state otherwise, we set the teleport probability to $0.15$.

\paragraph{Datasets.}
To evaluate \sysname, we run our experiments on six real-world graph datasets as
shown in Table~\ref{tbl:datasets}.
Our datasets are of various types: wiki-Vote is a Wikipedia who-votes-on-whom
network; twitter-2010 is an online social network; and web-BerkStan,
web-Google, uk-1m, and uk-union are web graphs.

\begin{table}[t]
    \small
    \centering
    \caption{Graph datasets} \label{tbl:datasets}
    \begin{tabular}{lrr}
        \toprule
        \textbf{Dataset}                                                & $N$          & $M$           \\
        \midrule
        wiki-Vote~\cite{Leskovec2010}                                   & 7,115        & 103,689       \\
        web-BerkStan~\cite{Leskovec2008a}                               & 685,230      & 7,600,595     \\
        web-Google~\cite{Leskovec2008a}                                 & 875,713      & 5,105,039     \\
        uk-1m~\cite{Boldi2008, Boldi2004}                               & 1,000,000    & 41,247,159    \\
        \addlinespace
        twitter-2010~\cite{Kwak2010}                                    & 41,652,230   & 1,468,365,182 \\
        uk-union~\cite{Boldi2008, Boldi2004}                            & 133,633,040  & 5,507,679,822 \\
        \bottomrule
    \end{tabular}
    \vspace{-0.2in}
\end{table}

\paragraph{Cluster.}
We perform all experiments on a cluster of eight machines, each with two
eight-core Intel Xeon E5-2650 2.60~GHz processors, 377~GB RAM, and 20~TB hard
disk, running Ubuntu~14.04.  All machines are connected via Gigabit Ethernet.

\subsection{Accuracy of Our Algorithms} \label{sec:accuracy}

In this subsection, we evaluate the accuracy of our MCFP algorithm and
\vcd algorithm.

\paragraph{Measurement.}
In most applications of PPR, the common approach is to return the top $k$
ranked vertices of the PPR vector for the vertex in query~\cite{Kyrola2013}.
For example, in Twitter's ``Who to Follow'' recommendation service, only the
top-ranked users will be recommended~\cite{Gupta2013}.  So to compare the exact
and approximate PPR vectors personalized to vertex $u$, it is important to
measure the difference between the top $k$ ranked vertices of exact PPR vector
$p_u$ and approximate PPR vector $\ph_u$.  Let $T_k^u$ be the set of vertices
having the $k$ highest scores in the PPR vector $\p_u$.  We approximate $T_k^u$
by $\widehat{T_k^u}$, which is the set of vertices having the $k$ highest
approximated scores in $\ph_u$ obtained from our algorithms or competitors.
The \emph{relative aggregated goodness (RAG)}~\cite{Singitham2004} measures how
well the approximate top-$k$ set performs in finding a set of vertices with
high PPR scores.  RAG calculates the sum of exact PPR values in the approximate
set compared to the maximum value achievable (by using the exact top-k set
$T_k^u$):
\[
    \RAG(k,u) = \frac{\sum_{v \in \widehat{T_k^u}} \p_u(v)}{\sum_{v \in T_k^u}
    \p_u(v)}.
\]
Note that the higher the RAG, the higher is the accuracy.  Also, RAG and its
variant have been widely used in previous work~\cite{Fogaras2005, Sarlos2006,
Bahmani2011}.  Since the degrees in social or web graphs follow power law
distributions, the vertices in tested graphs have widely different degrees.  In
order to observe the behavior of the algorithms for different degrees, we
divide the vertices in each tested graph into $B$ buckets, with bucket $i$
including all vertices whose out-degrees are in the interval $[2^{i-1}, 2^i)$
for $i=1, 2, \ldots, B-1$ and bucket $B$ including all vertices whose
out-degrees are in the interval $[2^{B-1}, \infty)$.  Since when $B=10$, bucket
$B$ only contains a very small number of vertices, so we set $B$ to $10$ in our
evaluation.  For each tested graph, we choose $10$ vertices from each bucket
randomly.  Then we use the power iteration method to compute the ground truth
PPR vectors for all $100$ selected vertices, denoted by $\p_u$ for a vertex
$u$.  Then, we compute the average RAG for these vertices at different values
of $k$.  Because the computation of the ground truth is very costly, we
evaluate the accuracy on four small graphs: wiki-Vote, web-BerkStan,
web-Google, and uk-1m.

\paragraph{Comparison of Monte-Carlo methods.}
In Figure~\ref{fig:mc-rag-200}, we evaluate the
accuracy of our proposed MCFP algorithm and the state-of-the-art method
described in Algorithm~\ref{alg:mc-ep} (and we denote it by MCEP).  From
Figure~\ref{fig:mc-rag-200}, to achieve RAG larger than $0.99$, it suffices to set
$R=2000$ for MCFP, when $k$ is $200$.  Also, we observe that the existing MCEP
algorithm is less accurate given the same value of $R$, the number of random
walks starting from each vertex.

Note that given $R$ random walks, our MCFP algorithm generates $R/c \approx
6.7R$ dependent sample points, since the average length of each walk is $1/c
\approx 6.7$.  In Figure~\ref{fig:k-rag}, we observe that our MCFP algorithm
with only $1,000$ random walks achieves the same level of accuracy as the
existing MCEP algorithm with $6,700$ random walks.  This means that the
dependent sample points from the MCFP algorithm and
the independent sample points from the MCEP algorithm actually have almost the
same effect in approximating PPR vector.  From Figure~\ref{fig:k-rag}, we also
observe that the RAG decreases as $k$ increases.  The reason is that when we
have a larger value of $k$, it is more difficult to approximate the top-$k$
set, because lower ranked vertices has smaller difference in PPR scores.

\begin{figure}[t]
    \centering
    \ref{legend:mc-rag-200}
    \\
    \subfigure[wiki-Vote]{
        \begin{tikzpicture}
            \begin{semilogxaxis}[
                    width=\textwidth*0.25,
                    grid=both,
                    ymin=0,
                    legend columns=-1,
                    legend to name=legend:mc-rag-200,
                    xtick={1, 10, 100, 1000, 10000},
                    minor y tick num=4,
                    xlabel=$R$, ylabel=RAG
                ]
                \addplot table[x=R,y=wiki-Vote-1]{./tables/mc-k200.table};
                \addplot table[x=R,y=wiki-Vote-2]{./tables/mc-k200.table};
                \legend{MCEP, MCFP}
            \end{semilogxaxis}
        \end{tikzpicture}
    }
    \subfigure[web-BerkStan]{
        \begin{tikzpicture}
            \begin{semilogxaxis}[
                    width=\textwidth*0.25,
                    grid=both,
                    ymin=0,
                    xtick={1, 10, 100, 1000, 10000},
                    minor y tick num=4,
                    xlabel=$R$, ylabel=RAG
                ]
                \addplot table[x=R,y=web-BerkStan-1]{./tables/mc-k200.table};
                \addplot table[x=R,y=web-BerkStan-2]{./tables/mc-k200.table};
            \end{semilogxaxis}
        \end{tikzpicture}
    }
    \subfigure[web-Google]{
        \begin{tikzpicture}
            \begin{semilogxaxis}[
                    width=\textwidth*0.25,
                    grid=both,
                    ymin=0,
                    xtick={1, 10, 100, 1000, 10000},
                    minor y tick num=4,
                    xlabel=$R$, ylabel=RAG
                ]
                \addplot table[x=R,y=web-Google-1]{./tables/mc-k200.table};
                \addplot table[x=R,y=web-Google-2]{./tables/mc-k200.table};
            \end{semilogxaxis}
        \end{tikzpicture}
    }
    \subfigure[uk-1m]{
        \begin{tikzpicture}
            \begin{semilogxaxis}[
                    width=\textwidth*0.25,
                    grid=both,
                    ymin=0,
                    xtick={1, 10, 100, 1000, 10000},
                    minor y tick num=4,
                    xlabel=$R$, ylabel=RAG
                ]
                \addplot table[x=R,y=uk-2007-05@1000000-1]{./tables/mc-k200.table};
                \addplot table[x=R,y=uk-2007-05@1000000-2]{./tables/mc-k200.table};
            \end{semilogxaxis}
        \end{tikzpicture}
    }
    \caption{Effect of $R$ on the MCFP algorithm ($k=200$)}
    \label{fig:mc-rag-200}
    \vspace{-0.2in}
\end{figure}
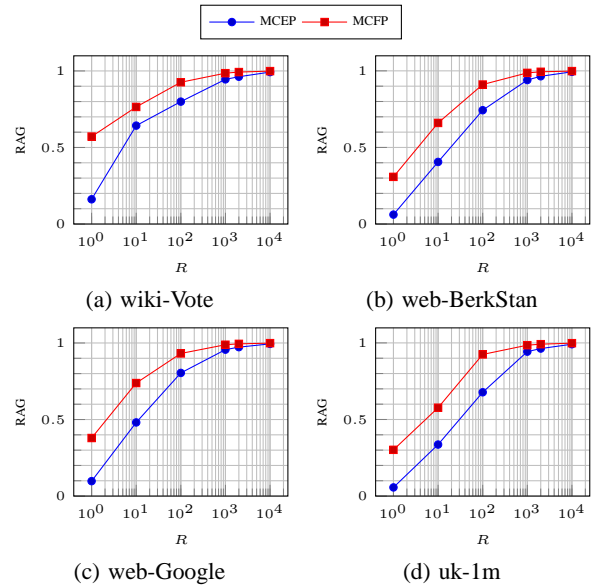

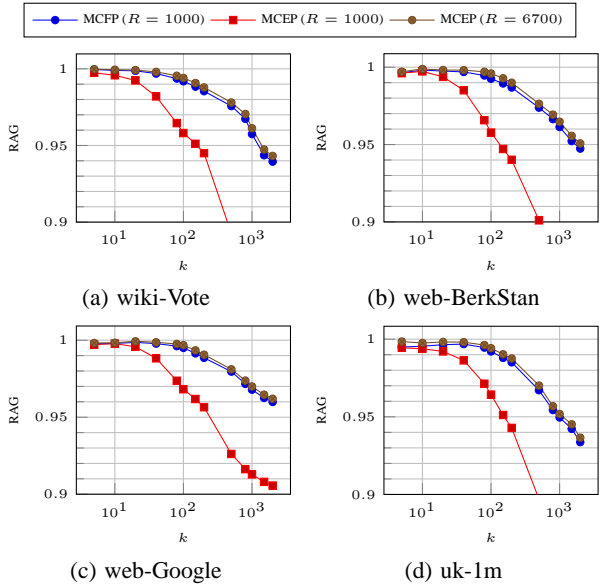
\begin{figure}[t]
    \centering
    \ref{legend:k-rag}
    \\
    \subfigure[wiki-Vote]{
        \begin{tikzpicture}
            \begin{semilogxaxis}[
                    width=\textwidth*0.25,
                    grid=both,
                    ymin=0.9,
                    legend columns=-1,
                    legend to name=legend:k-rag,
                    xtick={1,10,100,1000,5000},
                    minor y tick num=4,
                    xlabel=$k$, ylabel=RAG,
                ]
                \addplot table[x=k,y=fp-r1000]{./tables/wiki-Vote-k-rag.table};
                \addplot table[x=k,y=ep-r1000]{./tables/wiki-Vote-k-rag.table};
                \addplot table[x=k,y=ep-r6700]{./tables/wiki-Vote-k-rag.table};
                \legend{MCFP ($R=1000$), MCEP ($R=1000$), MCEP ($R=6700$)}
            \end{semilogxaxis}
        \end{tikzpicture}
    }
    \subfigure[web-BerkStan]{
        \begin{tikzpicture}
            \begin{semilogxaxis}[
                    width=\textwidth*0.25,
                    grid=both,
                    ymin=0.9,
                    xtick={1,10,100,1000,5000},
                    minor y tick num=4,
                    xlabel=$k$, ylabel=RAG,
                ]
                \addplot table[x=k,y=fp-r1000]{./tables/web-BerkStan-k-rag.table};
                \addplot table[x=k,y=ep-r1000]{./tables/web-BerkStan-k-rag.table};
                \addplot table[x=k,y=ep-r6700]{./tables/web-BerkStan-k-rag.table};
            \end{semilogxaxis}
        \end{tikzpicture}
    }
    \subfigure[web-Google]{
        \begin{tikzpicture}
            \begin{semilogxaxis}[
                    width=\textwidth*0.25,
                    grid=both,
                    ymin=0.9,
                    xtick={1,10,100,1000,5000},
                    minor y tick num=4,
                    xlabel=$k$, ylabel=RAG,
                ]
                \addplot table[x=k,y=fp-r1000]{./tables/web-Google-k-rag.table};
                \addplot table[x=k,y=ep-r1000]{./tables/web-Google-k-rag.table};
                \addplot table[x=k,y=ep-r6700]{./tables/web-Google-k-rag.table};
            \end{semilogxaxis}
        \end{tikzpicture}
    }
    \subfigure[uk-1m]{
        \begin{tikzpicture}
            \begin{semilogxaxis}[
                    width=\textwidth*0.25,
                    grid=both,
                    ymin=0.9,
                    xtick={1,10,100,1000,5000},
                    minor y tick num=4,
                    xlabel=$k$, ylabel=RAG,
                ]
                \addplot table[x=k,y=fp-r1000]{./tables/uk-2007-05-1000000-k-rag.table};
                \addplot table[x=k,y=ep-r1000]{./tables/uk-2007-05-1000000-k-rag.table};
                \addplot table[x=k,y=ep-r6700]{./tables/uk-2007-05-1000000-k-rag.table};
            \end{semilogxaxis}
        \end{tikzpicture}
    }
    \caption{Comparison of the MCFP algorithm and MCEP algorithm with varying
    top set size $k$}
    \label{fig:k-rag}
    \vspace{-0.2in}
\end{figure}

\paragraph{Effectiveness of the \vcd algorithm.}
As described in Section~\ref{sec:alg}, our \vcd algorithm computes
a new PPR vector based on the PPR index whose precision is decided by
the number of random walks $R$ used in the preprocessing.  Obviously, if the
precision of the PPR index is low, we have to run the \vcd
algorithm for more iterations.  If the precision of the PPR index is
high enough for online query, then \sysname can return the PPR vector in the
index directly.  In this experiment, we first use the preprocessing
procedure described in Section~\ref{sec:impl_pre} to generate the PPR index
with $R=10$ or $100$.  Then, we use the batch query procedure based on
our \vcd algorithm described in Section~\ref{sec:impl_query} to
execute the PPR queries.  The results are shown in
Figure~\ref{fig:proj-rag-200}.  Here, when $T=0$, it means that the batch query
procedure returns the PPR vector stored in the PPR index directly.  Also,
when $R=0$, it means we use the \vcd algorithm directly without the
precomputed PPR index.  On all four graphs, to achieve RAG larger than
$0.99$, we need $T=7, 5, 2$ iterations when $R=0, 10, 100$ respectively.  The
results confirm with our hypothesis: with more random walks used in the
preprocessing phase, we need less iterations during the batch query phase to
achieve the same level of accuracy.  However, more random walks also increase
the overheads of the preprocessing phase.  Let us examine the time and space
costs of the preprocessing phase and also the running time of the batch query
phase in the next subsection.

\begin{figure}[t]
    \centering
    \ref{legend:proj-rag-200}
    \\
    \subfigure[wiki-Vote]{
        \begin{tikzpicture}
            \begin{axis}[
                    width=\textwidth*0.25,
                    grid=both,
                    legend style={anchor=south, at={(0.5,1.03)}},
                    legend columns=-1,
                    legend to name=legend:proj-rag-200,
                    ymin=0.5,
                    xtick=data,
                    xlabel=$T$, ylabel=RAG
                ]
                \addplot table[x=T,y=wiki-Vote-R0]{./tables/proj-k200.table};
                \addplot table[x=T,y=wiki-Vote-R10]{./tables/proj-k200.table};
                \addplot table[x=T,y=wiki-Vote-R100]{./tables/proj-k200.table};
                \legend{$R=0$, $R=10$, $R=100$}
            \end{axis}
        \end{tikzpicture}
    }
    \subfigure[web-BerkStan]{
        \begin{tikzpicture}
            \begin{axis}[
                    width=\textwidth*0.25,
                    grid=both,
                    ymin=0.5,
                    xtick=data,
                    xlabel=$T$, ylabel=RAG
                ]
                \addplot table[x=T,y=web-BerkStan-R0]{./tables/proj-k200.table};
                \addplot table[x=T,y=web-BerkStan-R10]{./tables/proj-k200.table};
                \addplot table[x=T,y=web-BerkStan-R100]{./tables/proj-k200.table};
            \end{axis}
        \end{tikzpicture}
    }
    \subfigure[web-Google]{
        \begin{tikzpicture}
            \begin{axis}[
                    width=\textwidth*0.25,
                    grid=both,
                    ymin=0.5,
                    xtick=data,
                    xlabel=$T$, ylabel=RAG
                ]
                \addplot table[x=T,y=web-Google-R0]{./tables/proj-k200.table};
                \addplot table[x=T,y=web-Google-R10]{./tables/proj-k200.table};
                \addplot table[x=T,y=web-Google-R100]{./tables/proj-k200.table};
            \end{axis}
        \end{tikzpicture}
    }
    \subfigure[uk-1m]{
        \begin{tikzpicture}
            \begin{axis}[
                    width=\textwidth*0.25,
                    grid=both,
                    ymin=0.5,
                    xtick=data,
                    xlabel=$T$, ylabel=RAG
                ]
                \addplot table[x=T,y=uk-2007-05@1000000-R0]{./tables/proj-k200.table};
                \addplot table[x=T,y=uk-2007-05@1000000-R10]{./tables/proj-k200.table};
                \addplot table[x=T,y=uk-2007-05@1000000-R100]{./tables/proj-k200.table};
            \end{axis}
        \end{tikzpicture}
    }
    \caption{Effect of $T$ on the \vcd algorithm ($k=200$)}
    \label{fig:proj-rag-200}
    \vspace{-0.2in}
\end{figure}
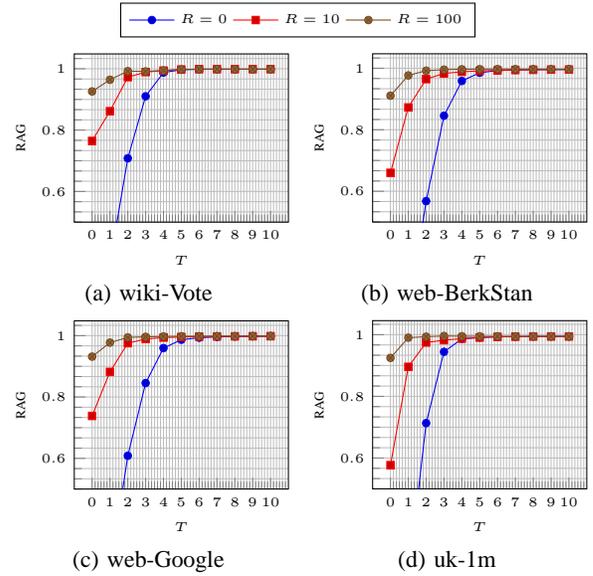

\subsection{Performance} \label{sec:runtime}

In this subsection, we first evaluate the space and time costs of the
preprocessing phase of \sysname on two large graphs twitter-2010 and uk-union.
Then, we compare the online query of \sysname with other previous
state-of-the-art methods.

\paragraph{Offline Preprocessing.}
We show the preprocessing costs in Table~\ref{tbl:preprocessing}.  The
preprocessing time is {\em sublinear} to the number of random walks $R$
starting from each vertex, which is an attractive property.  When we set
$R=100$, the preprocessing time of the uk-union graph is 53~mins and the index
size is 1.6X compared to the graph size.  As we will show later, we can achieve
very fast online query when we set $R=100$.

Fogaras et al.~\cite{Fogaras2005} propose to use the Monte-Carlo methods to
first compute the fully PPR, and then return the
PPR vectors directly as query results.  This approach can be considered as a
special case of our \sysname when $R$ is large.  In the preprocessing, when $R$ is
large enough, the precomputed approximate PPR vectors are already accurate
enough and can be returned directly as online query results.  If we want to
achieve RAG larger than $0.99$, this special case can be achieved by setting
$R$ to $2000$.  However, as shown in Table~\ref{tbl:preprocessing}, in this
case, the preprocessing takes $11.6$~hours for the uk-union graph to finish,
which makes it undesirable for large graphs.  More importantly,
since the index size increases almost linearly with $R$, when $R=2000$, the PPR
index ($1.48$~TB on the disk for the twitter-2010 graph) cannot fit in the
distributed memory (around $2.9$~TB in our cluster) due to the additional
storage overhead of the hash tables.
This significantly slows down the online
query as we will show later.  This means that for very large graphs, computing
and storing the fully PPR with large $R$ is impractical.  In contrast, \sysname
only computes and caches a light-weight PPR index which allows fast online
query.

\begin{table*}[t]
    \centering
    \small
    \caption{Preprocessing Costs} \label{tbl:preprocessing}
    \begin{tabular}{lrlrrr}
        \toprule
        \textbf{Dataset}   & \textbf{Data Size} & \textbf{Type}      & $R=10$   & $R=100$  & $R=2000$   \\
        \midrule
        twitter-2010       & 25 GB              & Preprocessing Time & 933.8 s  & 1728.2 s & 4.5 hours  \\
                           &                    & Index Size on Disk & 12.4 GB  & 95.5 GB  & 1.48 TB    \\
        \addlinespace
        uk-union           & 93 GB              & Preprocessing Time & 2087.4 s & 3187.2 s & 11.6 hours \\
                           &                    & Index Size on Disk & 29.7 GB  & 148.0 GB & 1.1 TB     \\
        \bottomrule
    \end{tabular}
    \vspace{-0.2in}
\end{table*}

\paragraph{Online Query.}
In Section~\ref{sec:accuracy}, we observe that to achieve RAG larger than
$0.99$, when $R$ is set to $0$, $10$, and $100$ in the
preprocessing, the \vcd algorithm should set the number of
iterations $T$ to $7$, $5$, and $2$ respectively.  We compare the query
response time by using different values of $R$ in the preprocessing, and
these results are shown in Figure~\ref{fig:query}.  We can observe that when $R$
increases, the query response time decreases because it needs less iterations to
achieve the same accuracy.
Note that $R=0$ means that \sysname runs without the PPR index.  In comparison,
with the PPR index generated by setting $R=100$, the query response time is
reduced by $86\%$ on the uk-union graph when the number of queries is $10,000$
vertices.
Moreover, when the number of queries increases, the query response time
increases very slowly.  This means our batch query procedure is very efficient by
sharing the computation and access to the underlying graph and index.


Finally, we compare our \sysname with other algorithms for online query:
\begin{itemize}
    \item PI: We implement the naive power iteration method~\cite{Page1999} on
        PowerGraph.
    \item MCFP: We also implement the MCFP algorithm as described in
        Section~\ref{sec:mc} for online query on PowerGraph.  To achieve RAG
        $ > 0.99$, we set $R$ to $2000$.
    \item Fully PPR (FPPR): As discussed above, our preprocessing phase can
        compute all PPR vectors offline.  Then at the online query, the PPR
        vectors in the PPR index can be returned directly.  However, in this
        case, the PPR index cannot fit in the main memory, so we store the PPR
        index in a constant key/value storage library called {\tt
        sparkey}\footnote{\url{https://github.com/spotify/sparkey}}.
\end{itemize}
The results are summarized in Table~\ref{tbl:perf}.  \sysname outperforms PI
and MCFP significantly, especially when the number of queries is large.  For
example, as shown in Table~\ref{tbl:perf}, the MCFP algorithm takes 179.38
seconds to compute 10,000 PPR vectors on twitter-2010, while our
\vcd algorithm takes just 9.06 and 3.64 seconds when $R=0$ and
$100$ in preprocessing respectively.  Note that the power iteration method
cannot even handle multiple queries due to its large space requirement.  When
the number of queries is one, FPPR has the fastest response time.  The reason is that all
PPR vectors are already precomputed in the preprocessing phase and it only
requires one disk I/O to retrieve the query result from {\tt sparkey}.
However, as the number of queries increases, the query response time of FPPR
increases linearly, because the throughput of {\tt sparkey} is
bounded by disk I/O when the PPR index is larger than the main memory.

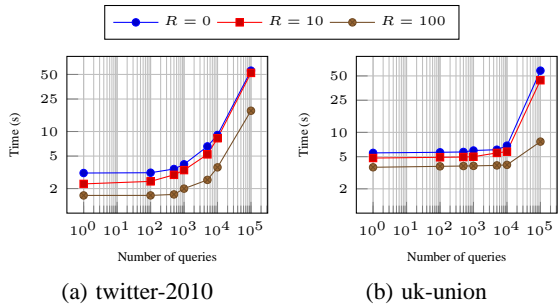
\begin{figure}[t]
    \centering
    \ref{legend:query}
    \\
    \subfigure[twitter-2010]{
        \begin{tikzpicture}
            \begin{loglogaxis}[
                    width=\textwidth*0.24,
                    grid=both,
                    legend style={anchor=south, at={(0.5,1.03)}},
                    legend columns=-1,
                    legend to name=legend:query,
                    xtick={1, 10, 100, 1000, 10000, 100000},
                    log y ticks with fixed point,
                    ytick={2, 5, 10, 25, 50},
                    ymin=1,
                    xlabel=Number of queries, ylabel=Time (s)
                ]
                \addplot table[x=k,y=R0]{./tables/query-twitter.table};
                \addplot table[x=k,y=R10]{./tables/query-twitter.table};
                \addplot table[x=k,y=R100]{./tables/query-twitter.table};
                \legend{$R=0$, $R=10$, $R=100$}
            \end{loglogaxis}
        \end{tikzpicture}
    }
    \subfigure[uk-union]{
        \begin{tikzpicture}
            \begin{loglogaxis}[
                    width=\textwidth*0.24,
                    grid=both,
                    xtick={1, 10, 100, 1000, 10000, 100000},
                    log y ticks with fixed point,
                    ytick={2, 5, 10, 25, 50},
                    ymin=1,
                    xlabel=Number of queries, ylabel=Time (s)
                ]
                \addplot table[x=k,y=R0]{./tables/query-uk.table};
                \addplot table[x=k,y=R10]{./tables/query-uk.table};
                \addplot table[x=k,y=R100]{./tables/query-uk.table};
            \end{loglogaxis}
        \end{tikzpicture}
    }
    \caption{Running time of the online query with varying the number of
    queries}
    \label{fig:query}
    \vspace{-0.2in}
\end{figure}

\begin{table*}[t]
    \centering
    \small
    \caption{Comparison of \sysname with other PPR computation algorithms} \label{tbl:perf}
    \begin{tabular}{llrrrrrr}
        \toprule
        \multirow{2}{*}{\textbf{Dataset}} & \textbf{Number of} &
        \multirow{2}{*}{\textbf{PI}} & \multirow{2}{*}{\textbf{MCFP}} &
        \multirow{2}{*}{\textbf{FPPR}} &
        \multicolumn{3}{c}{\textbf{\sysname}} \\
        \cmidrule(lr){6-8}
                         & \textbf{Queries} &        &           &           & $R=0$   & $R=10$  & $R=100$ \\
        \midrule
        twitter-2010     & 1                & 95.2 s & 7.83 s    & 25 ms     & 3.10 s  & 2.28 s  & 1.65 s  \\
                         & 100              & N/A    & 9.45 s    & 2.50 s    & 3.13 s  & 2.45 s  & 1.65 s  \\
                         & 10,000           & N/A    & 179.38 s  & 117.50 s  & 9.06 s  & 8.74 s  & 3.64 s  \\
                         & 100,000          & N/A    & 1739.74 s & 978.25 s  & 55.65 s & 52.20 s & 17.96 s \\
        \addlinespace
        uk-union         & 1                & 320 s  & 15.47 s   & 18 ms     & 5.57 s  & 4.81 s  & 3.70 s  \\
                         & 100              & N/A    & 16.85 s   & 1.75 s    & 5.67 s  & 4.91 s  & 3.80 s  \\
                         & 10,000           & N/A    & 20.14 s   & 125.25 s  & 6.84 s  & 5.78 s  & 3.96 s  \\
                         & 100,000          & N/A    & 50.20 s   & 1037.25 s & 58.10 s & 44.19 s & 7.68 s  \\
        \bottomrule
    \end{tabular}
    \vspace{-0.2in}
\end{table*}

\section{Related Work} \label{sec:related}

There are extensive studies on Personalized PageRank computation.  Here, we
highlight some examples.  The idea of Personalized PageRank is first
proposed in~\cite{Page1999}.  The simplest method for computing PPR is by the
power iteration method~\cite{Page1999}.  To compute $\p_u$, it starts with
$\p_u = \e_u$ and repeatedly performs the update following
Equation~(\ref{eq:ppr}).  Since the complexity of every iteration to compute
only one PPR vector is $O(N+M)$, the power iteration method is prohibitively
expensive for large graphs.  Many approximation techniques have been proposed
to speed up the PPR computation since then.

The seminal work by Jeh and Widom~\cite{Jeh2003} proposed the Hub Decomposition
algorithm which approximates the PPR vectors for a small hub set $H$ of
high-PageRank vertices.  However, to achieve full personalization, the hub set
needs to include all vertices which requires $O(N^2)$ space, clearly
impractical for large problems.   Our \vcd algorithm is partially
inspired by the decomposition theorem in~\cite{Jeh2003}.  The key idea of our
\vcd algorithm is that it can utilize the precomputed PPR index to
provide fast online query for any vertex.  Compared to the Hub Decomposition
algorithm, the size of the PPR index in \sysname is much more compact and thus
can be cached in distributed memory, which is critical to efficient online
query.

Note that most existing methods for PPR computation are designed for a single
machine, and are therefore limited by its restricted computational
power~\cite{Sarlos2006, Fujiwara2012, Lofgren2014, Maehara2014b, Shin2015}.
Fogaras et al.~\cite{Fogaras2005} proposed the Monte-Carlo End-Point algorithm
which is the first scalable solution that achieves full personalization,
although they do not provide an implementation.  To the best of our knowledge,
there are only two scalable implementations for~\cite{Fogaras2005}.  The first
system is designed for the MapReduce model, which aims to optimize the I/O
efficiency~\cite{Bahmani2011}.  The other system, called
DrunkardMob~\cite{Kyrola2013}, is designed for single-machine disk-based graph
engines like GraphChi~\cite{Kyrola2012} and VENUS~\cite{Liu2015b}.  Although
DrunkardMob works on a single machine, since it stores the graph on disk, it
can still handle graphs larger than the main memory.  The two implementations
can be used to compute the PPR vectors offline.  However, since they both rely
on disk heavily, they cannot handle online query efficiently unless they
precompute all PPR vectors which is very costly.  In this paper, we propose a
more efficient Monte-Carlo Full-Path (MCFP) algorithm which utilizes the full
trajectory of each random walk.  Furthermore, we propose the \vcd
algorithm which can execute PPR queries online based on the PPR index obtained
from the MCFP algorithm.  Our \vcd algorithm could also be realized
on disk-based graph engines like GraphChi~\cite{Kyrola2012} and
VENUS~\cite{Liu2015b} that support vertex-centric programming. Since at each
iteration, a disk-based graph engine will scan the entire graph on disk, the
I/O cost is at least $\frac{M}{B}$, where $B$ is the size (measured in edge) of
block transfer.  So due to the excessive I/O cost, the \vcd
algorithm, if implemented on disk-based graph engines, could be difficult to
handle online PPR queries very efficiently.  Fujiwara et al.~\cite{Fujiwara2012}
proposes a method to compute a single PPR vector via non-iterative approach
based on sparse matrix representation.  Similar to \sysname, their solution
also separates the computation into an offline phase and an online phase.
However, in the offline phase, their approach needs to permute the adjacent
matrix and precompute the QR decomposition of the matrix. For large matrix,
this can be very time consuming.  For example, even for a graph with only 0.35
million vertices and 1.4 million edges, the precomputation takes more than two
hours which makes their method not suitable for large-scale graphs.

Avrachenkov et al.~\cite{Avrachenkov2007} proposed a Monte-Carlo complete path
algorithm for computing the global PageRank which also utilizes the full
trajectory of each random walk.  Bahmani et al.~\cite{Bahmani2010} extended the
Monte-Carlo method in~\cite{Avrachenkov2007} to incremental graphs and
Personalized PageRank.  Their approach stores a small number of precomputed
random walks for each vertex, and then stitch short random walks to answer
online (Personalized) PageRank queries~\cite{Bahmani2010}.  However, this
algorithm is not suitable for distributed graph engines for several reasons:
(1) it requires random accesses to the graph data and precomputed index, (2)
its algorithmic logic is not compatible with the vertex-centric programming
model, and 3) it may require a large number of iterations.

Buehrer and Chellapilla~\cite{Buehrer2008} designed a graph
compression algorithm for web graphs.  To compute (Personalized) PageRank or
other random walk based computations, they proposed to first compress the web
graph, and then perform the computation on the compressed graph using
the power iteration method.  However, this approach has two disadvantages: (1)
it cannot be extended to weighted graphs like our approach; (2) its efficiency
can degrade significantly on social networks since compressing social graphs is
usually more costly and the compression ratios on social networks are much
worse compared to web graphs~\cite{Boldi2011}.  Also, to our knowledge, there
is no available implementation of the algorithm in~\cite{Buehrer2008}.

\section{Conclusion} \label{sec:conclusion}

In this paper, we present our system, \sysname, which adopts a novel framework
for online PPR computation on distributed graph engines.  \sysname uses the
MCFP algorithm to compute a light-weight PPR index, and then uses the \vcd
algorithm to compute PPR vectors by a linear combination of the PPR index in an
online manner.  Our evaluation shows that our MCFP algorithm provides a more
accurate approximation compared to the existing Monte-Carlo End-Point
algorithm~\cite{Fogaras2005} by simulating the same number of random walks.
Extensive experiments on two large-scale real-world graphs show that \sysname
is quite scalable in balancing offline preprocessing and online query, and is
capable of computing tens of thousands of PPR vectors in an order of seconds.
We believe PowerWalk can be readily extended for many large-scale random walk
models~\cite{Li2015, Wu2012}.

{\small
\bibliographystyle{abbrv}
\bibliography{library,ref}
}

\ifdefined\techreport
\begin{appendix}

\section{Proof of Theorem~\ref{thm:fp-bound}} \label{sec:proof-1}

To prove Theorem~\ref{thm:fp-bound}, we first need the following result on
random walks~\cite[Theorem~2.1]{Gillman1998}.

\begin{thm}[The Chernoff bound for Random Walks] \label{thm:gillman}
    Consider the random walk following the transition matrix $\P$ on a graph $G
    = (V,E)$ with initial distribution $\q$. Let $S \subset V$,
    and let $t_n$ be the number of visits to $S$ in $n$ steps. For any $\gamma
    \geq 0$,
    \begin{align*}
        & \hspace{-0.3in} \Pr\left[t_n - n \sum_{v \in S} \pivec(v) \geq
            \gamma\right] \leq \\
        & \quad \left(1+\frac{\gamma\epsilon}{10n}\right) \sqrt{\sum_{w \in V}
            \left(\frac{\q(w)}{\sqrt{\pivec(w)}}\right)^2}
            \exp\left(\frac{-\epsilon \gamma^2}{20n}\right),
    \end{align*}
    where $\pivec$ is the stationary distribution and $\epsilon$ is the
    eigenvalue gap of stochastic matrix $\P$.
\end{thm}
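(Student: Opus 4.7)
The plan is to prove this Chernoff-type concentration bound for additive functionals of a random walk by the standard spectral / exponential-moment route pioneered by Gillman. The starting point is Markov's inequality applied to $\exp(r t_n)$ for a parameter $r>0$ to be chosen later:
\[
\Pr[t_n - n\pivec(S) \geq \gamma] \leq e^{-r(n\pivec(S)+\gamma)}\,\mathbb{E}[e^{r t_n}],
\]
so the entire argument reduces to bounding the moment generating function $\mathbb{E}[e^{r t_n}]$ and then optimizing in $r$.

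First I would write the MGF as a bilinear form in a twisted transition matrix. Let $\mathbf{E}(r)$ be the diagonal matrix with $[\mathbf{E}(r)]_{v,v}=e^{r\mathbf{1}_S(v)}$ and set $\mathbf{M}(r)=\P\mathbf{E}(r)$. Unrolling the walk and using the Markov property gives
\[
\mathbb{E}[e^{r t_n}] = \q^T \mathbf{E}(r)\mathbf{M}(r)^{n-1}\mathbf{1}.
\]
Assuming the reversibility that is implicit in referring to ``the eigenvalue gap $\epsilon$,'' I would symmetrize by conjugating with $\mathbf{D}=\mathrm{diag}(\pivec)^{1/2}$: the operator $\mathbf{D}\P\mathbf{D}^{-1}$ is symmetric with the same spectrum as $\P$, and $\widetilde{\mathbf{M}}(r):=\mathbf{D}\mathbf{M}(r)\mathbf{D}^{-1}$ inherits a symmetric structure whose largest eigenvalue $\lambda(r)$ can be attacked by Rayleigh-quotient methods. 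A Cauchy--Schwarz step on the bilinear form above then produces exactly the prefactor $\sqrt{\sum_w (\q(w)/\sqrt{\pivec(w)})^2}$ together with the factor $\lambda(r)^{n-1}$.

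The core technical step is to control $\lambda(r)$ by perturbation theory around $r=0$. Since $\mathbf{M}(0)=\P$ is a simple Perron eigenvalue equal to $1$ with left eigenvector $\pivec$, analytic perturbation theory (Kato) yields
\[
\lambda(r) = 1 + r\,\pivec(S) + \tfrac{1}{2}\lambda''(0)r^2 + O(r^3),
\]
where the second-order coefficient admits the variational expression
\[
\tfrac{1}{2}\lambda''(0) = \langle \mathbf{1}_S,(I-\P_\star)^{+}\mathbf{1}_S\rangle_{\pivec},
\]
with $\P_\star$ the restriction to the complement of the Perron eigenspace. Using the spectral gap bound $\|(I-\P_\star)^{+}\|\leq 1/\epsilon$ and $\|\mathbf{1}_S\|_{\pivec}^2 = \pivec(S)\leq 1$, I would push through the elementary estimate $\lambda(r)\leq \exp\!\bigl(r\pivec(S)+r^2/(2\epsilon)\bigr)$ for $r$ in an appropriate range, which (after tightening constants by keeping track of the $(1+\gamma\epsilon/(10n))$ correction from a first-order remainder term) produces the $r^2/(20\epsilon)$ rate needed to match the statement.

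Plugging back into Markov and optimizing with $r \approx \gamma\epsilon/(10 n)$ yields the exponent $\exp(-\epsilon\gamma^2/(20n))$ and the linear prefactor $(1+\gamma\epsilon/(10n))$. I expect the main obstacle to be the second step: getting the right numerical constants out of the perturbation estimate for $\lambda(r)$, since the naive bound loses factors. The trick is to keep the remainder in the Taylor expansion of $\log\lambda(r)$ in integral form and exploit reversibility to express it as a quadratic form on the orthogonal complement of the constants, where the gap $\epsilon$ provides a uniform spectral bound. The remainder of the proof is routine manipulation of the resulting inequality to obtain the stated bound verbatim, and an identical argument applied to $-t_n$ yields the matching lower-tail estimate.
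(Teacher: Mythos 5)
The paper does not actually prove this statement: it is imported verbatim as Theorem~2.1 of Gillman~\cite{Gillman1998} and used as a black box in the appendix proof of Theorem~\ref{thm:fp-bound}, so there is no in-paper proof to compare yours against. Your sketch is, in outline, a faithful reconstruction of Gillman's original argument: Markov's inequality applied to $e^{r t_n}$, the matrix-product representation of the moment generating function via the tilted matrix $\P\mathbf{E}(r)$, symmetrization by $\mathrm{diag}(\pivec)^{1/2}$, Cauchy--Schwarz to peel off the prefactor $\bigl(\sum_{w}\q(w)^2/\pivec(w)\bigr)^{1/2}$, a perturbative bound on the top eigenvalue $\lambda(r)$ of the tilted operator, and optimization near $r=\epsilon\gamma/(10n)$. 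Two caveats are worth recording. First, virtually all of the difficulty of the theorem sits inside the one step you defer: establishing a bound of the form $\log\lambda(r)\leq r\,\pivec(S)+5r^2/\epsilon$ on an explicit range of $r$, with constants sharp enough to produce both the rate $\exp(-\epsilon\gamma^2/(20n))$ and the linear correction $(1+\gamma\epsilon/(10n))$. Invoking Kato perturbation theory and promising to ``keep track of constants'' is a plan rather than a proof; Gillman devotes most of his paper to exactly this estimate. Second, you are right that the symmetrization step requires the chain to be reversible, an assumption that is implicit in Gillman's setting but is not stated in the theorem as reproduced here; note that the matrix $\P=(1-c)\A+c\mathds{1}\e_u$ to which the paper later applies the bound is not reversible in general, so the hypothesis you flag is a substantive one for the paper's own usage and would need to be handled (for instance via a non-reversible extension of Gillman's bound) rather than assumed away.
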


\begin{proof}[Proof of Theorem~\ref{thm:fp-bound}]
    For a given vertex $u$, we define the matrix $\P$ as
    $$ \P = (1-c) \A + c \mathds{1} \e_u, $$
    where $\mathds{1}$ is a column vector whose elements are all $1$'s.  It is
    easy to see that P is row-stochastic and Equation~\eqref{eq:ppr} can be
    rewritten as:
    $$ \p_u = \p_u \P. $$

    As discussed in Section~\ref{sec:mc}, the $R$ random walks in
    Algorithm~\ref{alg:mc-fp} can be seen as one long random walk from vertex
    $u$ following the transition matrix $\P$.  Since the average length of each
    walk is $1/c$, the total length of all $R$ random walks or the length of
    the long random walk is $n \approx R/c$.  By using
    Theorem~\ref{thm:gillman}, we have
    \begin{align}
        & \Pr[\ph_u(v) - \p_u(v) \geq \gamma] = \Pr[\x_n(v) - n \p_u(v) \geq n
            \gamma] \nonumber \\
        & \quad \leq \left(1+\frac{n\gamma\epsilon}{10n}\right) \sqrt{\sum_{w
            \in V} \left(\frac{\q(w)}{\sqrt{\p_u(w)}}\right)^2}
            \exp\left(\frac{-\epsilon (n\gamma)^2}{20n}\right).
        \label{eq:2-1}
    \end{align}

    Since $\P$ is a row-stochastic matrix, the largest eigenvalue $\lambda_1$
    is $1$, while the second largest eigenvalue, denoted as $\lambda_2$, is
    $1-c$~\cite{Haveliwala2003}.  So the eigenvalue gap $\epsilon$ is
    $\lambda_1 - \lambda_2 = c$.  Since the merged long random walk always
    starts from $u$, the initial distribution is $\e_u$.  From
    Equation~\eqref{eq:ppr}, it is easy to see that $\p_u(u) \geq c$.  Hence,
    we have
    \[
        \sqrt{\sum_{w \in V} \left(\frac{\q(w)}{\sqrt{\p_u(w)}}\right)^2} =
        \frac{1}{\sqrt{c}}.
    \]
    Now we can apply the above results into Equation~\eqref{eq:2-1}, and we
    have
    \begin{align*}
        & \hspace{-0.2in} \Pr[\ph_u(v) - \p_u(v) \geq \gamma] \\
        & \quad \leq \left(1+\frac{n \gamma c}{10n}\right) \frac{1}{\sqrt{c}}
            \exp\left(\frac{-c (n\gamma)^2}{20n}\right) \\
        & \quad = \frac{1}{\sqrt{c}} \left(1+\frac{\gamma c}{10}\right)
            \exp\left(\frac{-\gamma^2 R}{20}\right),
    \end{align*}
    which completes the bound of over-estimation.

    Applying Theorem~\ref{thm:gillman} to the set $V \backslash \set{v}$ gives
    the same bound for the probability of under-estimation
    \begin{align*}
        & \Pr[\ph_u(v) - \p_u(v) \leq -\gamma] \\
        & \quad = \Pr[\x_n(v) - n \p_u(v) \leq - n \gamma] \\
        & \quad = \Pr\left[\sum_{w \in V \backslash \set{v}} \x_n(w) - n \sum_{w \in
        V \backslash \set{v}} \p_u(w) \geq n \gamma\right]\\
        & \quad \leq \frac{1}{\sqrt{c}} \left(1+\frac{\gamma c}{10}\right)
            \exp\left(\frac{-\gamma^2 R}{20}\right),
    \end{align*}
    which completes the proof.
\end{proof}

\section{Proof of Theorem~\ref{thm:vd}} \label{sec:proof-2}

\begin{proof}
    To prove Theorem~\ref{thm:vd}, we claim that for $k = 0, 1, \ldots, T$
    \[
        \proj(u, T) = \s_u^{(T-k)} + \sum_{v \in V} \f_u^{(T-k)}(v) \decomp(u,k).
    \]
    The proof is by induction on $k$.  The case for $k=0$ is obvious.  Suppose
    the claim is true for some $k \geq 0$.  We show it holds for $k+1$ as
    follows:
    \begin{align*}
        & \quad \proj(u, T) \\
        & = \s_u^{(T-k)} + \sum_{v \in V} \f_u^{(T-k)}(v) \decomp(v,k) \\
        & = \left(\s_u^{(T-k-1)} + \sum_{w \in V} c \cdot \f_u^{(T-k-1)}(w)\e_w\right) \\
        & + \sum_{v' \in V} \left(\sum_{w \in V} \frac{1-c}{|O(w)|} \sum_{v \in
        O(w)} \f_u^{(T-k-1)}(w) \e_v \right)(v') \decomp(v',k) \\
        & = \left(\s_u^{(T-k-1)} + \sum_{w \in V} c \cdot \f_u^{(T-k-1)}(w)\e_w\right) \\
        & + \left(\sum_{w \in V} \frac{1-c}{|O(w)|} \sum_{v \in
        O(w)} \f_u^{(T-k-1)}(w) \decomp(w,k) \right) \\
        & = \s_u^{(T-k-1)} \\
        & + \sum_{w \in V} \f_u^{(T-k-1)}(w) \left(c \cdot \e_w +
        \frac{1-c}{|O(w)|} \sum_{v \in O(w)} \decomp(w,k) \right) \\
        & = \s_u^{(T-(k+1))} + \sum_{w \in V} \f_u^{(T-(k+1))}(w) \decomp(w,k+1)
    \end{align*}
    When $k=T$, we have
    \begin{align*}
        \proj(u, T) & = \s_u^{(0)} + \sum_{v \in  V} \f_u^{(0)}(v) \decomp(v, T) \\
                    & = \decomp(u, T).
    \end{align*}
\end{proof}

\end{appendix}

\fi

\end{document}